\newif\ifFull
\newtheorem{theorem}{Theorem}
\newtheorem{lemma}[theorem]{Lemma}
\theoremstyle{definitions}
\renewcommand{\P}{\mathsf{P}}
\newcommand{\NP}{\mathsf{NP}}
\begin{document}

\title{Inapproximability of Orthogonal Compaction}

% --------------------------------------------------------------------
\ifFull
\author{Michael J. Bannister \and David Eppstein \and Joseph A. Simons \\ \\
Computer Science Department, University of California, Irvine}
\else
\Issue{0}{0}{0}{0}{0} % volume, number, start page, end page, year
% --------------------------------------------------------------------
\HeadingAuthor{Bannister, Eppstein, Simons} % short list of authors for header
\HeadingTitle{Compaction for Graph Drawings} % short title for header
% --------------------------------------------------------------------
\Ack{This work was supported in part by NSF grant 0830403 and by the Office of Naval Research under grant N00014-08-1-1015. A previous version of this paper appeared in the International Symposium on Graph Drawing 2011.}
% --------------------------------------------------------------------

\author{Michael J. Bannister}{mbannist@uci.edu}
\author{David Eppstein}{eppstein@uci.edu}
\author{Joseph A. Simons}{jsimons@uci.edu}

\affiliation{Computer Science Department,\\
University of California, Irvine}

%% --------------------------------------------------------------------
\submitted{}%
\reviewed{}%
\revised{}%
\reviewed{}%
\revised{}%
\accepted{}%
\final{}%
\published{}%
\type{Regular paper}%
\editor{}%
%% --------------------------------------------------------------------
\fi

\maketitle

\begin{abstract}
We show that several problems of compacting orthogonal graph drawings to use the minimum number of rows, area, length of longest edge or total edge length cannot be approximated better than  within a polynomial factor of optimal in polynomial time unless $\P=\NP$. We also provide a fixed-parameter-tractable algorithm for testing whether a drawing can be compacted to a small number of rows.
\end{abstract}

\ifFull
\else
% --------------------------------------------------------------------
\Body % body of paper begins here
% --------------------------------------------------------------------
\fi

\section{Introduction}
\emph{Orthogonal graph drawing} is a widely used graph drawing style for low-degree graphs, in which each vertex is represented as a point or a rectangle in an integer grid, and each edge is represented as a polyline composed out of axis-parallel line segments~\cite{Eiglsperger:2001}. When used for nonplanar graphs (Figure~\ref{fig:ortho-example}),
orthogonal drawing has several desirable properties including polynomial area, high angular resolution, and right-angled edge crossings; the last property, in particular, has been shown to aid in legibility of graph drawings~\cite{HuaHonEad-PacVis-08}.

\begin{figure}[b]
\centering
\ifFull
\includegraphics[scale=1]{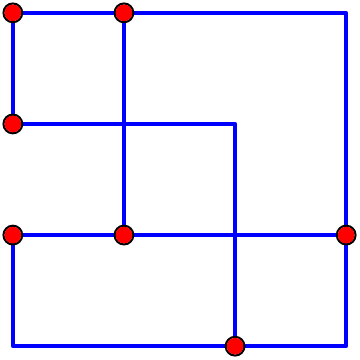}\rule{4em}{0em}
\includegraphics[scale=1]{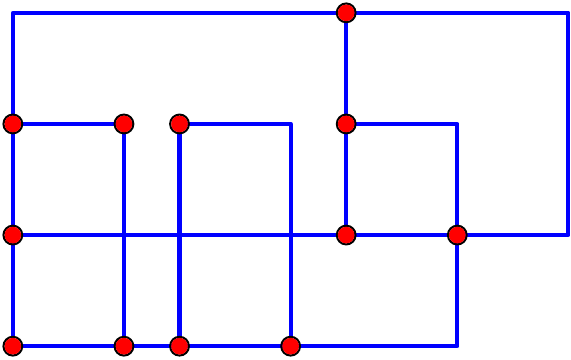}
\else
\includegraphics[scale=0.85]{figures/ortho_example1}\rule{4em}{0em}
\includegraphics[scale=0.85]{figures/ortho_example2}
\fi
\caption{Examples of non-planar orthogonal drawings.}
\label{fig:ortho-example}
\end{figure}

Typical orthogonal graph drawing systems employ a multiphase approach~\cite{BieMadTol-IJCGA-00,Eiglsperger:2001} in which the input graph is \emph{planarized} by replacing its crossings with vertices, a \emph{topological embedding} of the graph (specifying the ordering of the edges around each vertex, but not the vertex and edge locations) is found, a flow algorithm is used to orient the edges in a way that minimizes the number of bends~\cite{Tam-SJC-87}, and vertex coordinates are assigned.  If vertices of degree greater than four exist, they may be expanded to rectangles as another phase of this process~\cite{BieMadTol-IJCGA-00}. Finally, the drawing is improved by \emph{compaction}, a step in which the vertices and bends of the graph are moved to new locations in order to reduce the area of the drawing while preserving its edge orientations and other features.

Some positive algorithmic results are known for the final compaction step for planar drawings;
for instance, Bridgeman et al.~\cite{BriDiBDid-CGTA-00} showed that planar orthogonal drawings in which the shapes of the faces in the drawing are restricted (so-called \emph{turn-regular drawings}) may be compacted into optimal area in polynomial time.
In the case of variable size vertices, e.g. due to labels, Eiglsperger et al. \cite{Eiglsperger-VarVert} have found heuristic based compaction algorithms that work well in practice. Also, Klau et al. have shown that the compaction step can be done via integer linear programming \cite{Klau-Zlp}. An experimental analysis of planar compaction methods was done by Klau et al. \cite{Klau-Exp}

However, when drawing nonplanar graphs, it may not be necessary or desirable for the compaction phase to preserve a fixed planarization of the graph. If one is compacting one dimension of a drawing at a time, then for planar compaction it is only possible to map the rows of the drawing monotonically to a smaller set of rows, while for nonplanar graphs it may also be useful to permute the rows with respect to each other. This greater freedom to choose how to compact the drawing may lead to much greater savings in drawing area, but it also leads to greater difficulty in finding a good compaction.

As Patrignani~\cite{Patrignani:1999} showed, even for arbitrary planar orthogonal graph drawings, compacting the drawing in a way that minimizes its area, total edge length, or maximum edge length is $\NP$-hard. Although these results do not directly extend to the nonplanar case, $\NP$-hardness in that case also follows from results of Eades et al. on rectilinear (bendless) drawing~\cite{Eades:2010}, and Ma{\v{n}}uch et al. where certain restricted cases of rectilinear drawing are considered \cite{Manuch:2011}.
But since compaction is performed primarily for aesthetic reasons (a smaller area drawing allows the drawing to be viewed at a larger scale, making its features more legible), exact optimization may not be important if a layout with small area can be achieved. Thus, we are led to the problem of how closely it is possible to \emph{approximate} the minimum area layout.
The problem of approximate compaction for nonplanar orthogonal drawings was explicitly listed as open by Eiglsperger et al.~\cite{Eiglsperger:2001}, and there appears to have been little progress on it since then.

In this paper we show that nonplanar compaction is hard even to approximate.
There exists a real number $c>0$ such that, unless $\P=\NP$, no polynomial time
algorithm can find a compaction of a drawing with $n$ features that is within a
factor of $n^c$ of optimal. The main idea is to find approximation-preserving
reductions from graph coloring, a problem known to be hard to approximate. We
also give a \emph{fixed-parameter tractable} algorithm for finding compactions 
that use very small numbers of grid rows, for drawings in which such a compaction is possible.

\subsection{Variations of the compaction problem}

In the compaction problems we study, the task is to move vertices and bends
while preserving the axis-parallel orientation (although not necessarily the
direction) of each edge, to minimize the number of rows, area, longest edge,
or total edge length of the
drawing. Our results apply either to \emph{orthogonal drawings} (drawings in
which edges may be polylines with bends, possible for any graph of maximum
degree four) or \emph{rectilinear drawings} (bendless drawings,  only
possible for some graphs)~\cite{Eades:2010,Eppstein:2009}; the distinction
between bends and vertices is unimportant for our results.

We distinguish between three variants of the compaction problem, depending on what vertex motions are allowed. In \emph{row-by-row compaction} (Figure~\ref{fig:rxreg}), the compacted layout maps each row of the input layout to a row of the output; all vertices that belong to the same row must move in tandem. In \emph{vertex-by-vertex vertical compaction} (Figure~\ref{fig:vxveg}), each vertex or bend may move independently, but only its $y$-coordinate may change; it must retain its horizontal position. In \emph{vertex-by-vertex free compaction}, vertices or bends may move arbitrarily in both coordinate directions. In all three of these problems, edges or edge segments must stay vertical or horizontal according to their orientation in the original layout. The compaction is not allowed to cause any new intersection between a vertex and a feature it was not already incident with, nor is it allowed to cause any two edges or edge segments to overlap for nonzero length; however, it may introduce new crossings that were not previously present. Therefore, the compacted drawing remains a valid representation of the graph; however, the introduction of crossings may reduce the aesthetics and readability of the graph, especially in cases where the initial input drawing is planar.

\begin{figure}[ht!]
\centering
\ifFull
\includegraphics[scale=1]{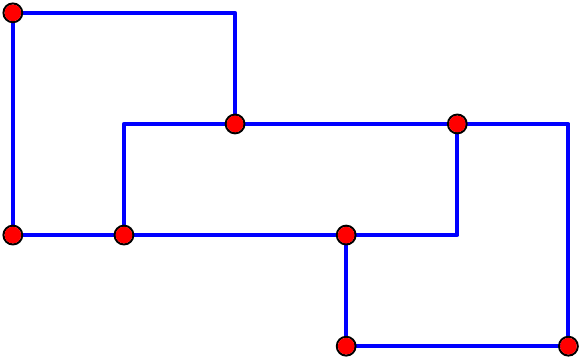}\rule{4em}{0em}
\includegraphics[scale=1]{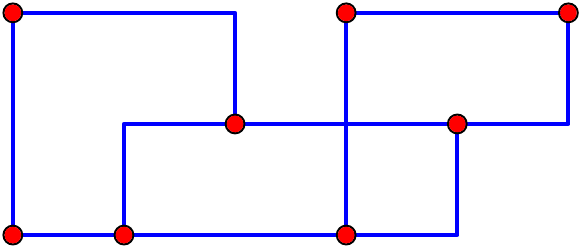}
\else
\includegraphics[scale=0.8]{figures/lcomp_example_before}\rule{4em}{0em}
\includegraphics[scale=0.8]{figures/lcomp_example_after}
\fi
\caption{Example of row-by-row compaction.}
\label{fig:rxreg}
\end{figure}

\begin{figure}[ht!]
\centering
\ifFull
\includegraphics[scale=1]{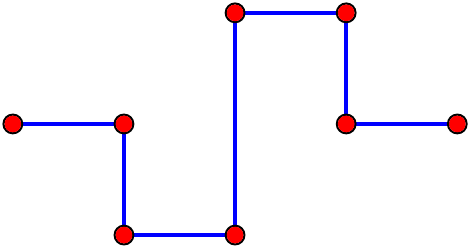}\rule{4em}{0em}
\includegraphics[scale=1]{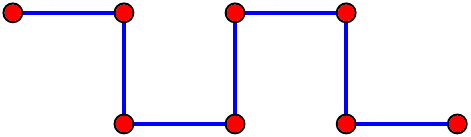}
\else
\includegraphics[scale=0.8]{figures/hscomp_example_before}\rule{4em}{0em}
\includegraphics[scale=0.8]{figures/hscomp_example_after}
\fi
\caption{Example of vertex-by-vertex compaction.}
\label{fig:vxveg}
\end{figure}

\subsection{New results}

We show the following results.
\begin{itemize}
\item 
In the row-by-row compaction problem, it is difficult to compact even a
drawing of a path graph (or a drawing of the two-vertex graph with many bends).
If the drawing has $n$ vertices or bends, then for every $\epsilon>0$  there is no
polynomial time algorithm that can find a compacted drawing whose number of
rows, maximum edge length, total edge length, or area is within
$O(n^{1/2-\epsilon})$ of optimal, unless $\P=\NP$. Moreover, even
finding drawings with a fixed number of rows is hard; it is $\NP$-complete
to determine whether there exists a compaction with only three rows.
\item 
In vertex-by-vertex vertical compaction, for every $\epsilon>0$, there is no
polynomial time algorithm that can find a compacted drawing of graphs of maximum degree three such that the number of
rows, maximum edge length, total edge length, or area of the drawing is within
$O(n^{1/3-\epsilon})$ of optimal, unless $\P=\NP$. The same result also applies in the
vertex-by-vertex free compaction problem.
\newpage %%%%%%%%%%%%%%%%%%%%%%%%%%%%%%%%%%%%%%
\item 
For the vertex-by-vertex or vertex-by-vertex free compaction problem of  three-dimensional
orthogonal drawings, for every $\epsilon>0$, it is not possible to approximate the
minimum number of layers in any one dimension, maximum edge length, total edge length, or volume to within $O(n^{1/2-\epsilon})$ of
optimal in polynomial time, unless $\P=\NP$, nor is it possible in
polynomial time to determine whether a three-layer drawing exists.
\item 
In vertex-by-vertex vertical compaction, there is an algorithm for testing
whether an orthogonal graph drawing can be compacted into $k$ rows, whose
running time is $O(k! n)$. Thus, the problem is fixed-parameter tractable.
\item 
In vertex-by-vertex free vertical compaction, there is a simple linear time algorithm for finding an optimal row compaction of a path.
\item
We provide a counterexample showing that an approximation algorithm described in the conference version of this paper was incorrect.
\end{itemize}

\section{Preliminaries}
\subsection{Orthogonal Drawing}

We define an \emph{orthogonal drawing} of a graph to be a drawing in which each vertex is represented as a point in the Euclidean plane (although most of our results apply as well to drawings in which the vertices are rectangles), and each edge is represented as a polyline (a polygonal chain of line segments), with each line segment parallel to one of the coordinate axes.
(See, for example, Figure~\ref{fig:ortho-example}.)
If each edge is itself a line segment, the drawing is \emph{rectilinear}; otherwise, the segments of a polyline meet at \emph{bends}. Each vertex or bend must only intersect the edges that it belongs to, and no two vertices or bends may coincide. Edges may cross each other, but only at right angles, at points that are neither vertices nor bends.

It is natural, in orthogonal graph drawing, to restrict the coordinates of the vertices and bends to be integers. In this case, the \emph{width} of a two-dimensional drawing is the maximum difference between the $x$-coordinates of any two of its vertices or bends, the \emph{height} is the maximum difference between $y$-coordinates of any two vertices or bends, and the \emph{area} is the product of the width and height.

A \emph{compaction} of a drawing $D$ is another drawing $D'$ of the same graph, in which the vertices and bends of $D'$ correspond one-for-one with the vertices and bends of $D$, and in which corresponding segments of the two drawings are parallel to each other. Typically, $D'$ will have smaller height, width, or area than $D$. We distinguish between three types of compaction:
\begin{itemize}
\item In \emph{row-by-row compaction}, the $x$-coordinate of each vertex or bend remains unchanged, and two vertices or bends that have the same $y$-coordinate in $D$ must continue to have the same $y$-coordinate in $D'$ (Figure~\ref{fig:rxreg}).
\item In \emph{vertex-by-vertex vertical compaction},  the $x$-coordinate of each vertex or bend remains unchanged, but the $y$-coordinates may vary independently of each other subject to the condition that the result remains a valid drawing with  edge segments parallel to the original drawing (Figure~\ref{fig:vxveg}).
\item In  \emph{vertex-by-vertex free vertical compaction}, the $x$- and $y$- coordinates of each vertex or bend are free to vary independently of other vertices or bends.
\end{itemize}
As can be seen in Figure~\ref{fig:rxreg}, we allow compaction to introduce new edge crossings and to reverse the directions of edge segments.
These concepts generalize straightforwardly to three dimensions.

\subsection{Graph Coloring and Inapproximability}

In the \emph{graph coloring problem}, we are given as input a graph and seek to color the vertices of the graph with as few colors as possible, in such a way that the endpoints of each edge are assigned different colors. Our results on the difficulty of compaction are based on known inapproximability results for graph coloring, one of the triumphs of the theory of probabilistically checkable proofs~\cite{Zuckerman:2007, Feige-Color, Khot-Color}.

\begin{theorem}[Zuckerman~\cite{Zuckerman:2007}]
\label{lem:chi-rho}
Let $\epsilon>0$ be any fixed constant. Then, unless $\P=\NP$, there is no
polynomial time algorithm that can compute the minimal number of colors
$\chi(G)$ in an optimal coloring of a given $n$-vertex graph within a factor of $n^{1-\epsilon}$.
\end{theorem}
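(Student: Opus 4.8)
The plan is to derive this from the machinery of probabilistically checkable proofs, following the line of work that first established near-optimal inapproximability for \textsc{Max Clique} / \textsc{Independent Set} and then transferred it to the chromatic number. First I would invoke the PCP theorem in a strong form --- H{\aa}stad's three-query verifier for $\NP$ with completeness arbitrarily close to $1$ and soundness arbitrarily close to $1/2$ --- and apply the FGLSS reduction: from the verifier one builds a graph whose vertices are the accepting (random string, query answers) configurations, with edges joining configurations that are mutually inconsistent. The independence number of this graph is proportional to the maximum acceptance probability of the verifier, so a \textsc{yes} instance of an $\NP$-complete problem yields a graph with a large independent set while a \textsc{no} instance yields one whose independent sets are all small, with a constant multiplicative gap.

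Next I would amplify the gap. The classical tool is the graph product (equivalently, running several independent copies of the verifier): a $k$-fold product raises a gap of $g$ to $g^k$ while raising the vertex count to $n^k$, which --- if products could be taken freely --- would already give a gap of $m^{1-\epsilon}$ in the vertex count $m$. The subtlety, and the reason this does not immediately yield a $\P \neq \NP$ hardness result, is that the most efficient amplification retains only a carefully chosen \emph{subsample} of the product vertices, and the easiest way to argue that such a subsample works is to pick it at random; a priori this only gives hardness under $\NP \not\subseteq \mathsf{ZPP}$.

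The heart of the argument, and the step I expect to be the main obstacle, is to \emph{derandomize} this amplification. The random induced subgraph is replaced by an explicit combinatorial object --- a disperser (or extractor) with near-optimal parameters: entropy loss that is only a constant (or $o(1)$) fraction of the source length, and degree polynomial in the inverse of the error. Viewing such a disperser as a bipartite incidence structure that selects which product-vertices to keep, one checks it has exactly the sampling property needed so that the selected subgraph still separates the \textsc{yes} and \textsc{no} cases with an $m^{1-\epsilon}$ gap while keeping the vertex count polynomial. Constructing dispersers meeting these parameters is the genuinely technical core of Zuckerman's paper; granting them, the whole reduction becomes deterministic and polynomial time, so the gap version of \textsc{Independent Set} is $\NP$-hard.

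Finally I would transfer the hardness to the chromatic number via a Feige--Kilian-style reduction: from a graph $G$ with a large independence-number gap, build $G'$ as a suitable product of $G$ with an auxiliary vertex-transitive graph having controlled independence and chromatic numbers, so that when $G$ has a large independent set $G'$ has small chromatic number, and when all independent sets of $G$ are small every color class of $G'$ is small and hence $\chi(G')$ is forced to be large --- with ratio $n^{1-\epsilon}$. The randomized ingredients of this transfer are handled by the same explicit dispersers, which gives the stated inapproximability of $\chi(G)$ within $n^{1-\epsilon}$ unless $\P = \NP$.
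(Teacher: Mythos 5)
This statement is not proved in the paper at all: it is quoted verbatim as an external result of Zuckerman, so there is no internal proof to compare against. Your outline faithfully reproduces the strategy of the cited source (H{\aa}stad-style PCP plus the FGLSS reduction, gap amplification whose random subsampling is derandomized by explicit near-optimal dispersers/extractors --- Zuckerman's actual technical contribution --- and the Feige--Kilian-style transfer from independence number to chromatic number), so it is essentially the same approach as the work the paper relies on; just be aware that, as written, it is a roadmap rather than a proof, since the disperser constructions and the quantitative bookkeeping of the chromatic-number transfer are assumed rather than carried out.
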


Our proofs use approximation-preserving reductions from coloring to compaction.
Given a graph $G$ to be colored, we will construct a different graph $G'$ and a
drawing $D$ of $G'$ such that the number of rows, maximum edge length, or total
edge length in a compaction $D'$ of $D$ necessarily correspond to the number of
colors in a coloring of $G$. With a reduction of this type, the approximation
ratio for compacting $D$ cannot be better than the approximation ratio for
coloring $G$. However, $D$ will in general have many more vertices and bends
than the number of vertices in $G$; the size of $D$ will be at least proportional to the number of edges in $G$, which is quadratic in its number of vertices. Therefore, although the approximation ratio will remain unchanged as a number by our reduction it will be expressed as a different function of the input size.

\subsection{Notation}
We write $n_G$, $n_D$, or (where unambiguous) $n$ for the number of vertices in
a graph $G$ or drawing $D$ and $m_G$, $m_D$, or $m$ for its number of edges.
Additionally, $b_D$ stands for the number of bends in drawing $D$. When we consider
compaction problems with the objectives of minimizing the number of rows, maximum edge length, and
total edge length in a drawing $D$ we will use $\lambda(D)$, $\mu(D)$, and $\tau(D)$ to denote
respectively the number of rows, maximum edge length, and total edge length 
in an optimal compaction. $\chi(G)$ represents the chromatic number of graph $G$.

\section{Hardness of Row-By-Row Compaction}
\label{sec:row-by-row}

As a warm-up, we start with a simplified path compaction problem in which every pair of objects on the same row of the drawing must move in tandem. Our proof constructs a drawing of a path graph such that the valid row assignments for our drawing are the same as the valid colorings of a given graph~$G$.

\begin{lemma}
\label{path lemma}
Given a graph $G$ we can construct in polynomial time a rectilinear drawing $D$ of a path graph with $O(m_G)$ vertices, such that $\lambda(D) = \chi(G)$.
\end{lemma}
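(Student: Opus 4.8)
The idea is to let the grid rows of $D$ play the role of the colors of $G$ and the vertical segments of the path play the role of the edges of $G$: a segment forced to remain vertical cannot have its two endpoints collapse onto the same row, so if every edge of $G$ is realized by some vertical segment, then every valid compaction induces a proper coloring, and conversely a proper coloring prescribes how to merge rows. First I would reduce to the case that $G$ is connected: deleting isolated vertices does not change $\chi(G)$, and if $G$ is disconnected one may join its components into a tree by adding one edge per surplus component, repairing a monochromatic new edge by permuting colors inside one of the two components it joins; this keeps $\chi(G)$ fixed and adds only $O(n_G)=O(m_G)$ edges (the edgeless case $\chi(G)=1$ is trivial, taking $D$ to be a single vertex). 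So assume $G$ is connected with vertices $v_1,\dots,v_n$, and assign $v_i$ to grid row $i$.

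Next I would pick a closed walk $W=w_0w_1\cdots w_L$ that traverses every edge of $G$ exactly twice: doubling each edge yields a connected multigraph of even degrees, whose Euler circuit has length $L=2m_G$, uses every edge of $G$ twice, and is computable in polynomial time. Then I would draw the path as a left-to-right ``staircase'': put a path vertex at $(0,\mathrm{row}(w_0))$, and for each $k=1,\dots,L$ place the $k$-th vertical segment in column $k$ between rows $\mathrm{row}(w_{k-1})$ and $\mathrm{row}(w_k)$, joining consecutive vertical segments by a horizontal segment lying in row $\mathrm{row}(w_k)$ and spanning columns $k$ to $k+1$. Treating every turning point as a path vertex makes $D$ a rectilinear drawing of a path on $2L+1=O(m_G)$ vertices. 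It is a valid drawing because consecutive walk vertices differ (so every segment has positive length and the two bends sharing a column are distinct) and because horizontal segments in a common row have interior-disjoint column ranges.

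For $\lambda(D)\le\chi(G)$: given a coloring $c$ of $G$ with $\chi(G)$ colors, compact $D$ by sending row $i$ to row $c(v_i)$. Each vertical segment joins two $G$-adjacent rows, hence stays nondegenerate; and one checks that no two segments overlap and no vertex lands in the interior of a non-incident segment, the point being that the only features sharing a column are the two endpoints of one vertical segment, while two horizontal segments sent to a common row already had disjoint column ranges (adjacent staircase segments sit in rows $c(w_{k-1}),c(w_k)$, which differ). So this is a valid compaction into at most $\chi(G)$ rows. For $\lambda(D)\ge\chi(G)$: in any compaction $D'$, set $c(v_i)$ to be the row that row $i$ maps to; this uses at most $\lambda(D)$ colors, and for every edge $\{v_i,v_j\}$ of $G$ the walk $W$ contains a step between $v_i$ and $v_j$, so $D$ has a vertical segment joining rows $i$ and $j$; it remains vertical in $D'$, so its two endpoints, sharing a column, lie in different rows, giving $c(v_i)\ne c(v_j)$. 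Hence $\chi(G)=\lambda(D)$.

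The routine-but-delicate step, and the one I expect to be the main obstacle, is the validity check for the coloring-induced compaction: merging rows according to a proper coloring must never make two bends coincide, never drop a bend onto a segment it is not part of, and never make two parallel segments overlap. All of this is forced by two structural features of the staircase — distinct vertical segments occupy distinct columns, and every vertical segment (and every adjacent pair of horizontal segments) spans rows corresponding to $G$-adjacent vertices, which therefore are never merged — so the check goes through, but it must be carried out carefully.
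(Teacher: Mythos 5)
Your proposal is correct and follows essentially the same construction as the paper: double the edges of $G$, take an Euler tour (a Chinese postman walk), assign each vertex of $G$ its own row, and draw the walk as a staircase path whose vertical segments realize the edges of $G$, so that row-by-row compactions correspond exactly to proper colorings and $\lambda(D)=\chi(G)$. Your explicit reduction to the connected case and the careful validity check of the coloring-induced compaction are details the paper leaves implicit, but they do not change the approach.
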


\begin{proof}
Find a Chinese postman walk for $G$ of length less than $2m_G$; that is, a walk that starts at an arbitrary vertex and visits each edge at least once, allowing vertices to be visited multiple times. Such a walk may be found, for instance, by doubling each edge of $G$ and constructing an Euler tour of the doubled graph. Let $u_iv_i$ be the $i$th edge in the walk, where $v_i=u_{i+1}$, and let $k\le 2m_G$ be the number of edges in the walk. Additionally, choose arbitrary distinct integer labels for the vertices of $G$  with $\ell(v)$ being the label for the vertex $v$.

To construct the drawing $D$, for $i$ from $0$ to $k$, place
vertices in the plane at the points $(i,\ell(u_i))$ and $(i+1,\ell(u_i))$, connected by a unit-length horizontal edge. Additionally, for $i$ from $0$ to $k-1$ draw a vertical edge from $(i+1,\ell(u_i))$ to $(i+1,\ell(v_i))$. See Figure \ref{fig:path_example} for an example of such a construction.

Two rows in the drawing conflict if and only
if the corresponding vertices in $G$ are adjacent. For every coloring of $G$,
we may compact $D$ by using one row for the vertices of each color, and conversely for every row-by-row compaction of $D$ we may color $G$ by using one color class for each row of the compaction (Figure~\ref{fig:row-by-row-coloring}). Therefore, $\lambda(D) = \chi(G)$. Also, $n_D=2k+2=O(m_G)$.
\end{proof}

\begin{figure}[t]
\centering
\includegraphics{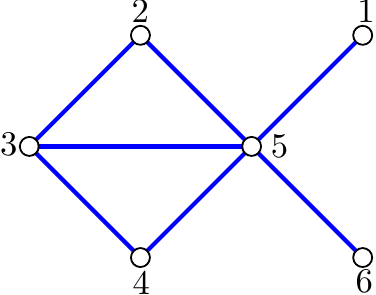}\rule{4em}{0em}
\includegraphics{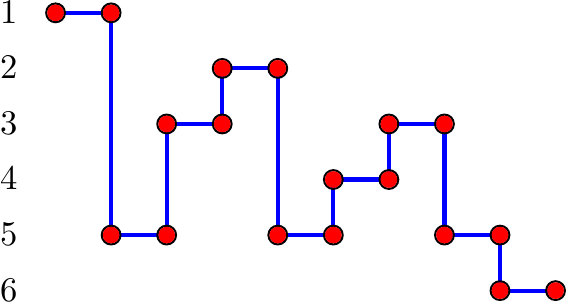}
\caption{Path constructed from a graph $G$ with the walk 1,~5,~3,~2,~5,~4,~3,~5,~6.}
\label{fig:path_example}
\end{figure}

\begin{figure}[t]
\centering
\includegraphics{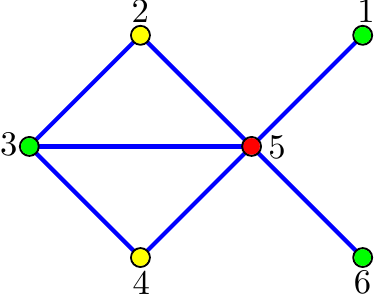}\rule{4em}{0em}
\includegraphics{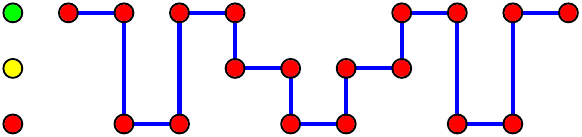}
\caption{The rows of a compacted drawing $D$ correspond to the colors in a coloring of~$G$.}
\label{fig:row-by-row-coloring}
\end{figure}

The same drawing $D$ can equivalently be viewed as an orthogonal drawing of the two-vertex graph $K_2$ with $O(m_G)$ bends. In the restricted model of compaction used in this section, horizontal compaction is disallowed, so optimizing the area of a compaction of $D$ is the same as optimizing the number of rows.

\begin{theorem}
\label{thm:row-by-row}
Let $\epsilon>0$ be any positive fixed constant, and suppose that $\P \neq \NP$.
Then there does not exist a polynomial time algorithm that
approximates the number of layers or the area in an optimal row-by-row compaction of a
given orthogonal or rectilinear drawing $D$ to within a factor of $(n_D+b_D)^{1/2-\epsilon}$. 
\end{theorem}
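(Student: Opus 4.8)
The plan is to combine Lemma~\ref{path lemma} with the inapproximability of graph coloring (Theorem~\ref{lem:chi-rho}), taking care to translate the approximation ratio from a function of $n_G$ into a function of the drawing size $n_D + b_D$. Given an $n_G$-vertex graph $G$, Lemma~\ref{path lemma} produces in polynomial time a rectilinear drawing $D$ of a path with $n_D = O(m_G) = O(n_G^2)$ vertices (and $b_D = 0$ in the rectilinear case, or $O(m_G)$ bends if we instead view $D$ as a drawing of $K_2$) satisfying $\lambda(D) = \chi(G)$. So an algorithm approximating $\lambda(D)$ within a factor $\rho$ immediately approximates $\chi(G)$ within the same factor $\rho$; the only thing that changes is how $\rho$ is expressed as a function of the input size.

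The key computation is the exponent translation. Suppose, for contradiction, that for some fixed $\epsilon > 0$ there is a polynomial time algorithm approximating $\lambda(D)$ within $(n_D + b_D)^{1/2 - \epsilon}$. Since $n_D + b_D = O(n_G^2)$, this ratio is $O\bigl( (c\, n_G^2)^{1/2-\epsilon} \bigr) = O\bigl( n_G^{1 - 2\epsilon} \bigr)$, which is $O(n_G^{1-\epsilon})$ for large $n_G$. Composing with the reduction of Lemma~\ref{path lemma} then yields a polynomial time algorithm approximating $\chi(G)$ within $n_G^{1-\epsilon}$, contradicting Theorem~\ref{lem:chi-rho} (after absorbing the constant and a harmless shrinkage of $\epsilon$). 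For the area claim, note the remark immediately preceding the theorem: in row-by-row compaction the $x$-coordinates are frozen, so the width of any compaction of $D$ equals the width of $D$ itself, a fixed quantity $O(n_D)$; hence the area of an optimal row-by-row compaction is $\Theta(n_D \cdot \lambda(D))$, and an approximation of the area within factor $\rho$ gives an approximation of $\lambda(D)$ within factor $\rho$ as well. The same reduction therefore handles the area objective with no change to the argument.

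Two bookkeeping points deserve care. First, the constant hidden in $n_D + b_D = O(n_G^2)$ and the $\Theta$ in the area expression must be absorbed without destroying the polynomial exponent; this is routine since $(c\,x)^{1/2-\epsilon} = c^{1/2-\epsilon} x^{1/2-\epsilon}$ and a multiplicative constant is swallowed by replacing $\epsilon$ with a slightly smaller $\epsilon' > 0$ (and noting the statement quantifies over all $\epsilon > 0$). Second, the statement covers both orthogonal and rectilinear drawings: Lemma~\ref{path lemma} already gives a rectilinear path drawing, and the remark after the lemma observes that the identical picture is an orthogonal drawing of $K_2$ with $O(m_G)$ bends, so $(n_D + b_D) = O(n_G^2)$ in that interpretation too, and the argument goes through verbatim.

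The main obstacle, such as it is, is not conceptual but arithmetic: getting the exponent in the final ratio to land at $1/2 - \epsilon$ rather than something weaker. This is forced by the quadratic blow-up $n_D = \Theta(m_G)$ inherent in encoding every edge of $G$ as a separate gadget in the path — if the reduction produced only $\widetilde O(n_G)$ drawing vertices one would get a $1 - \epsilon$ bound, but a Chinese postman walk traversing all $m_G$ edges genuinely requires $\Theta(m_G)$ steps for dense $G$, and $m_G$ can be $\Theta(n_G^2)$. Conversely the quadratic loss is exactly matched by the $1$ in Zuckerman's $n_G^{1-\epsilon}$, which is why $1/2$ is the natural exponent here and cannot be improved by this reduction without a more economical encoding of coloring into path drawings.
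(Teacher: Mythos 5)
Your proposal is correct and follows essentially the same route as the paper: the same reduction via Lemma~\ref{path lemma}, the same exponent translation $n_D^{1/2-\epsilon} = O(m_G^{1/2-\epsilon}) = O(n_G^{1-2\epsilon})$ contradicting Theorem~\ref{lem:chi-rho}, and the same observation that frozen $x$-coordinates make the area objective equivalent to the row objective. The only difference is that you spell out the constant-absorption and the rectilinear-versus-orthogonal bookkeeping, which the paper leaves implicit.
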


\begin{proof}
Suppose for a contradiction that algorithm $\mathcal{A}$ can solve the row-by-row compaction problem to within a factor $\rho \le n_D^{1/2-\epsilon}$ of optimal. Let $\mathcal{A}'$ be an algorithm for  coloring an input graph $G$ by performing the following steps:
\begin{enumerate}
\item Use Lemma~\ref{path lemma} to construct a path drawing $D$ from the given graph $G$.
\item Use algorithm $\mathcal{A}$ to compact $D$.
\item Color $G$ using one color for each row of the compacted drawing.
\end{enumerate}
Then the approximation ratio of algorithm $\mathcal{A}'$ for coloring is the same number $\rho$ as the approximation ratio of algorithm $\mathcal{A}$ for compaction, whether measured by area or by number of rows. However,
$$\rho\le n_D^{1/2-\epsilon}=O(m_G^{1/2-\epsilon})=O(n_G^{1-2\epsilon}),$$
an approximation ratio that contradicts Theorem~\ref{lem:chi-rho}.
\end{proof}

The same reduction, together with the $\NP$-completeness of graph 3-color\-ability, shows that it is $\NP$-complete to determine whether a given drawing $D$ has a row-by-row compaction that uses at most three rows.

\begin{theorem}
The problem of determining if an orthogonal graph has a row-by-row compaction into at most three rows is $\NP$-complete.
\end{theorem}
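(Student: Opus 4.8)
The plan is to derive the theorem directly from the construction already established in Lemma~\ref{path lemma}, combined with the classical fact that deciding $3$-colorability of a graph is $\NP$-complete. So two things need to be done: establish membership in $\NP$, and give the reduction.

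For membership, I would observe that a row-by-row compaction is succinctly certifiable. A witness is simply the function assigning each of the at most $n_D+b_D$ distinct rows of $D$ to one of three output rows (equivalently, the integer $y$-coordinates of all vertices and bends of the compacted drawing $D'$, which lie in $\{1,2,3\}$). Given such a candidate, one checks in polynomial time that it yields a valid drawing: that every edge segment keeps its original orientation (this is automatic since $x$-coordinates are fixed and horizontal segments stay horizontal while vertical segments between merged rows either degenerate to length zero, which is not allowed, or reverse or retain direction), that no two vertices or bends coincide, and that no two parallel segments overlap for nonzero length. Each of these is a finite check over pairs of features, so the verification runs in polynomial time, and the decision problem is in $\NP$.

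For $\NP$-hardness I would reduce from graph $3$-colorability. Given an input graph $G$, apply Lemma~\ref{path lemma} to construct, in polynomial time, a rectilinear drawing $D$ of a path with $O(m_G)$ vertices and $\lambda(D)=\chi(G)$. Then $D$ admits a row-by-row compaction into at most three rows if and only if $\lambda(D)\le 3$, i.e.\ if and only if $\chi(G)\le 3$, i.e.\ if and only if $G$ is $3$-colorable. Since the reduction is polynomial and both directions follow from the \emph{equality} $\lambda(D)=\chi(G)$ (not merely an inequality), this is a correct many-one reduction, completing the proof.

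The only part requiring any care is the $\NP$-membership bookkeeping — making sure that ``valid drawing'' can be checked in polynomial time, in particular ruling out the degenerate cases where a vertical edge segment would be contracted to zero length or two horizontal segments sharing an $x$-interval would be merged onto the same row. But this is exactly the conflict condition already analyzed in Lemma~\ref{path lemma} (two rows conflict precisely when the corresponding vertices of $G$ are adjacent), so no new argument is needed; the rest of the proof is immediate.
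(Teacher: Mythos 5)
Your proposal is correct and takes essentially the same route as the paper, which proves this theorem by simply invoking the reduction of Lemma~\ref{path lemma} together with the $\NP$-completeness of graph 3-colorability (using the equality $\lambda(D)=\chi(G)$). Your added detail on $\NP$-membership verification is sound and fills in bookkeeping the paper leaves implicit.
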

%\begin{proof}
%\end{proof}

\begin{figure}[ht]
\centering
\ifFull
\includegraphics[width=0.7\textwidth]{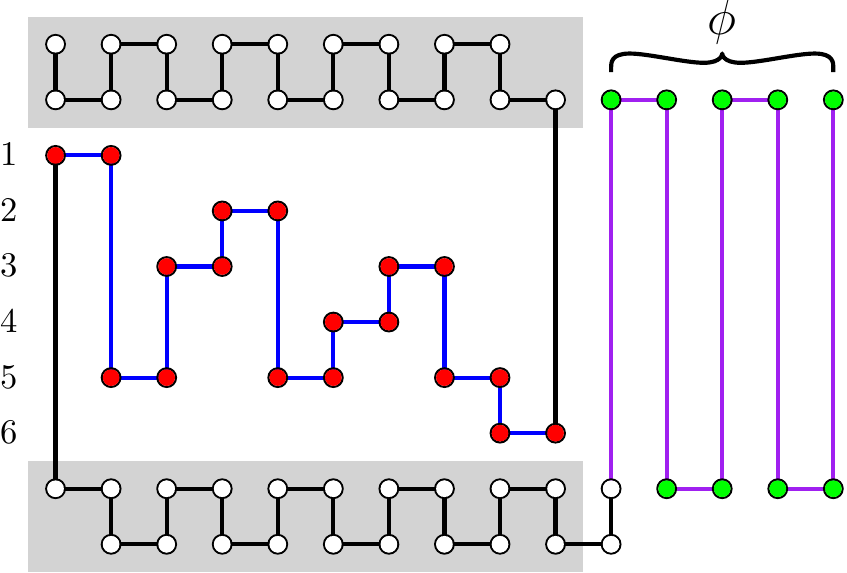}
\else
\includegraphics[width=0.8\textwidth]{figures/longEdge_row}
\fi
\caption{
\label{fig:path_example_edgeLength}
We add a frame above and below the path construction of Figure~\ref{fig:path_example}
(shaded gray). The longest edge runs along the right side of the drawing from the top
frame to the bottom frame and will have length exactly $\chi + 1$.}
\end{figure}

For the compaction problems with respect to longest edge, and total edge length we will modify the construction in Lemma~\ref{path lemma} by the addition of a frame that will be used to measure $\chi(G)$ with $\tau(D)$ and $\mu(D)$. The frame is illustrated in Figure~\ref{fig:path_example_edgeLength}.

\begin{lemma}
\label{lem:rxrframe}
Given a graph $G$ we can construct in polynomial time a rectilinear drawing $D$ of a path graph with $O(m_G+\phi)$ vertices, such that
\[
\mu(D) = \chi(G) + 1 \quad\text{and}\quad
\phi \chi(G) \leq \tau(D) \leq O(m_G + \phi) \chi(G).
\]
\end{lemma}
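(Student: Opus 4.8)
The plan is to graft a rigid ``frame'' onto the path drawing of Lemma~\ref{path lemma} so that $\chi(G)$ shows up as an \emph{edge length} rather than as a number of rows. Let $D_0$ be the drawing from Lemma~\ref{path lemma}, whose valid row-by-row compactions are exactly the proper colorings of $G$; in particular $\lambda(D_0)=\chi(G)$ and, more importantly, in \emph{every} valid row-by-row compaction $D_0$ occupies at least $\chi(G)$ distinct rows (two $G$-adjacent rows cannot merge, since a vertical edge of $D_0$ joins them). I would form $D$ by extending the path at its ends to add: a long horizontal ``top run'' of $\Theta(\phi)$ unit edges placed above $D_0$ and spanning every column occupied by $D_0$; a similar ``bottom run'' placed below $D_0$; $\Theta(\phi)$ vertical connectors joining the top and bottom runs (one of them down the far right of the drawing, which will be the longest edge); and short segments wiring $D_0$ to the two runs. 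The resulting object is a single path, drawn rectilinearly, with $O(m_G+\phi)$ vertices, constructed from $G$ in polynomial time, and in the initial drawing the two runs lie strictly above and strictly below $D_0$.

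The key structural claim, and the step I expect to be the main obstacle, is that in \emph{every} valid row-by-row compaction the rows used by $D_0$ still lie strictly between the row of the top run and the row of the bottom run. The intuition is that each run occupies an entire row over all of $D_0$'s columns, so no vertical edge incident to $D_0$ (nor any connecting segment) can pass through that row without running through one of the run's vertices, which is a forbidden vertex--edge incidence; together with connectedness of the path and the way $D_0$ is tied to both runs, this should force $D_0$ to stay sandwiched. Making this airtight is delicate precisely because the compaction is allowed to introduce new edge crossings, so one must rule out the compaction slipping part of $D_0$ past a run by crossing it; getting this right is what dictates exactly how and where the wiring segments must attach.

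Granting the claim, both bounds are routine. For the lower bounds: in any valid compaction the $\ge \chi(G)$ rows of $D_0$ lie strictly between the two run rows, so those rows differ by at least $\chi(G)+1$; hence every frame connector, in particular the right-hand one, has length at least $\chi(G)+1$, giving $\mu(D)\ge\chi(G)+1$, and summing the lengths of the $\Theta(\phi)$ connectors gives $\tau(D)\ge\phi\,\chi(G)$. For the upper bounds: take an optimal $\chi(G)$-coloring of $G$, use Lemma~\ref{path lemma} to compact $D_0$ into $\chi(G)$ consecutive rows, and place the top run one row above and the bottom run one row below, using $\chi(G)+2$ rows in all; then every horizontal edge (in $D_0$ or the runs) is a unit edge, every vertical edge internal to $D_0$ has length at most $\chi(G)-1$, and every frame connector has length exactly $\chi(G)+1$, so the longest edge is exactly $\chi(G)+1$ and, with the lower bound, $\mu(D)=\chi(G)+1$. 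The total length in this compaction is $O(m_G)$ (unit edges of $D_0$) $+\,O(m_G)\cdot(\chi(G)-1)$ (vertical edges of $D_0$) $+\,O(\phi)$ (unit edges of the runs) $+\,\Theta(\phi)\cdot(\chi(G)+1)$ (the connectors) $=O((m_G+\phi)\,\chi(G))$, which together with the lower bound $\phi\,\chi(G)\le\tau(D)$ gives the claimed two-sided estimate. A final check confirms the bookkeeping: $D$ is a path graph, it is rectilinear, it has $O(m_G+\phi)$ vertices, and it is produced in polynomial time.
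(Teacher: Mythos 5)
Your proposal is correct and follows essentially the same route as the paper: attach a frame above and below the Lemma~\ref{path lemma} drawing with $\phi$ vertical connectors on the right, argue that in any row-by-row compaction the original rows stay sandwiched between the two frame rows (the paper's own justification of this claim is as brief as yours, resting on the same forbidden-overlap/incidence observation), and then read off $\mu(D)=\chi(G)+1$ and the two-sided bound on $\tau(D)$ exactly as you do. The only cosmetic difference is that you use $\Theta(\phi)$ connectors where the paper uses exactly $\phi$, which matters only for the constant-free lower bound $\phi\chi(G)\le\tau(D)$ and is fixed by taking the connector count to be $\phi$.
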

\begin{proof}
Given a graph $G$ we construct $D$ by first performing the construction of Lemma~\ref{path lemma}, and then adding the vertices and edges as illustrated in Figure~\ref{fig:path_example_edgeLength}.

We claim that vertices created by Lemma~\ref{path lemma} must stay between the two gray regions in a row-by-row compaction. The leftmost vertex cannot be placed above the upper gray region without causing an illegal edge overlap, and if one vertex is placed above the upper gray region then all vertices must be placed above the upper gray region to avoid illegal edge overlaps. Hence none of the original vertices can be placed above the upper gray region (unless the entire drawing is flipped). Similar reasoning shows that none of the original vertices may be placed below the lower gray region.

By construction the $\phi$ edges to the right of the drawing are the longest edges in the drawing, even after a row-by-row compaction. The length of one of these edges is always $2$ more than the length of the gap between the gray regions, which is $\chi(G) - 1$ by Lemma~\ref{path lemma}. Hence the length of the edges in an optimal compaction is $\mu(D) = \chi(G) + 1$.

For the total edge length we note that there are at most $2m_G + \phi$ columns, and each of these columns has at most $2\chi(G)$ in length. The horizontal segments make at most $O(m_G + \phi)$ total length. So altogether we have that $\tau(D)$ is at most $O(m_G + \phi) \chi(G)$. Considering only the vertical edges on the right we see that $\phi \chi(G) \leq \tau(D)$, which gives the desired bounds.
\end{proof}

\begin{theorem}
\label{thm:row-by-row-edgeSum}
Let $\epsilon>0$ be any positive fixed constant, and suppose that
$\P\neq\nobreak\NP$.
Then there does not exist a polynomial time algorithm that
approximates the longest edge length or the total edge length in an optimal row-by-row compaction of a given orthogonal or rectilinear drawing $D$ to within a factor of $(n_D+b_D)^{1/2-\epsilon}$. 
\end{theorem}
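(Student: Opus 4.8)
The plan is to reuse, essentially verbatim, the reduction skeleton from the proof of Theorem~\ref{thm:row-by-row}, but to feed the coloring instance through Lemma~\ref{lem:rxrframe} instead of Lemma~\ref{path lemma}, fixing the frame parameter to $\phi := m_G$. Suppose for contradiction that a polynomial time algorithm $\mathcal{A}$ compacts any orthogonal or rectilinear drawing $D$ so that its longest edge (respectively its total edge length) is within a factor $\rho \le (n_D+b_D)^{1/2-\epsilon}$ of optimal. Given a graph $G$ to be colored, define $\mathcal{A}'$ as follows: run Lemma~\ref{lem:rxrframe} on $G$ with $\phi=m_G$ to get a rectilinear path drawing $D$ with $n_D = O(m_G)$ vertices and $b_D = 0$ bends; run $\mathcal{A}$ on $D$ to get a valid compaction $D'$; and output the coloring of $G$ that assigns one color to each row of $D'$ that is occupied by some vertex of the Lemma~\ref{path lemma} sub-drawing of $D$. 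By the row-conflict property established in Lemma~\ref{path lemma}, two such rows can be mapped together by a row-by-row compaction only when the corresponding vertices of $G$ are non-adjacent, so this coloring is proper; let $r$ be the number of colors it uses, so $\chi(G)\le r$.

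For the longest-edge objective I would then argue: in $D'$ every vertex of the Lemma~\ref{path lemma} sub-drawing lies strictly between the two frame strips, since the ``vertices must stay between the gray regions'' argument in the proof of Lemma~\ref{lem:rxrframe} applies to an arbitrary valid row-by-row compaction (up to a harmless global flip, which does not change the objective or the row structure), not only to the optimum. Hence each of the $\phi$ right-hand frame edges spans all $r$ of those rows and has length at least $r+1$, so $r < \mu(D') \le \rho\,\mu(D) = \rho(\chi(G)+1) \le 2\rho\,\chi(G)$. Thus $\mathcal{A}'$ approximates $\chi(G)$ within a factor $2\rho$.

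For the total-edge-length objective the same geometric observation gives $\tau(D') \ge \phi\,r$, and combining this with the upper bound $\tau(D)\le O(m_G+\phi)\,\chi(G)$ from Lemma~\ref{lem:rxrframe} yields
\[
r \;\le\; \frac{\tau(D')}{\phi} \;\le\; \frac{\rho\,\tau(D)}{\phi} \;\le\; \frac{\rho\cdot O(m_G+\phi)\,\chi(G)}{\phi} \;=\; O(\rho)\,\chi(G),
\]
using $\phi = m_G$, so $\mathcal{A}'$ again approximates $\chi(G)$ within a factor $O(\rho)$. In both cases $n_D + b_D = O(m_G) = O(n_G^2)$, hence $\rho \le (n_D+b_D)^{1/2-\epsilon} = O(n_G^{1-2\epsilon})$, and the resulting coloring ratio $O(\rho) = O(n_G^{1-2\epsilon}) = o(n_G^{1-\epsilon})$; a polynomial time algorithm of this quality contradicts Theorem~\ref{lem:chi-rho}, so no such $\mathcal{A}$ exists.

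The size bookkeeping and the reduction glue are routine, mirroring Theorem~\ref{thm:row-by-row}. The one point that needs genuine care is that we must recover a proper coloring from an arbitrary near-optimal compacted \emph{drawing} (not merely from its objective value) while keeping the number of colors controlled; this is exactly why we rely on the ``all original vertices stay between the frames'' property holding for every valid compaction, and why the frame must force the right-hand edges to be long in any layout. The only parameter choice is $\phi$: it must be large enough ($\phi = \Theta(m_G)$ suffices) for the frame edges to dominate the total length so that $r/\chi(G)$ stays $O(\rho)$, yet small enough ($\phi$ polynomial in $n_G$) to keep $|D|$ polynomial; taking $\phi = m_G$ conveniently leaves the final exponent identical to that of Theorem~\ref{thm:row-by-row}.
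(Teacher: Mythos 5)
Your proof is correct and follows essentially the same route as the paper: both cases reduce from coloring via the framed construction of Lemma~\ref{lem:rxrframe}, with the right-hand frame edges forcing the longest edge (and, after dividing by $\phi$, the total edge length) of any valid row-by-row compaction to track the number of rows occupied by the original path vertices and hence $\chi(G)$. The only cosmetic differences are your choice $\phi=m_G$ instead of the paper's $\phi=n_G^2$ and that you extract an explicit coloring from the compacted rows rather than returning $\tau'(D)/\phi$ as a numerical estimate of $\chi(G)$; both variants give the same $O(n_G^{1-2\epsilon})$ ratio and the same contradiction with Theorem~\ref{lem:chi-rho}.
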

\begin{proof}
The proof of the longest edge case is the same as the proof of Theorem~\ref{thm:row-by-row} for the number of rows where Lemma~\ref{lem:rxrframe} is used in place of Lemma~\ref{path lemma}.

More work is needed for the total length case. Suppose for a contradiction that an algorithm
$\mathcal{A}$ could solve the row-by-row compaction problem with respect to total
edge length to within a factor of $n_D^{1/2-\epsilon}$ of optimal. Let
$\mathcal{A}'$ be an approximation algorithm for $\chi(G)$ that compacts the drawing produced by Lemma~\ref{lem:rxrframe} with $\phi = n_G^2$, and returns $\tau'(D)/n^2_G$ where $\tau'$ is the total edge length of the approximate compaction.

From Lemma~\ref{lem:rxrframe} we know that
\[
n^2_G \chi(G) \leq \tau(D) \leq O(n^2_G)\chi(G),
\quad\text{and}\quad
\tau(D) \leq \tau'(D) \leq n_D^{1/2 - \epsilon}\tau(D)
\]
by assumption. Thus,
\[
\chi(G) \leq \tau'(D)/n_G^2 \leq n_D^{1/2-\epsilon}/n_G^2 \tau(D)
\leq O(n_G^2)/n_G^2 n_D^{1/2 - \epsilon} \chi(G)
\leq O(n_G^{1-2\epsilon}) \chi(G)
\]
contradicting Theorem~\ref{lem:chi-rho}.
\end{proof}

\section{Hardness of Vertex-By-Vertex Compaction}
\label{section:vertex-vertex}

Our hardness result for vertex-by-vertex vertical compaction follows roughly the same outline as Theorem~\ref{thm:row-by-row}: translate graph vertices into drawing features such that two features  can be compacted onto the same row if and only if the corresponding graph vertices can be assigned the same color. However, if we translated adjacencies in the graph to be colored into direct overlaps between pairs of features, as we did for row-by-row compaction of path graphs, this translation would only let us represent interval graphs, which are easily colored~\cite{Olariu:1991}. Instead we use an \emph{edge gadget} depicted in Figure~\ref{fig:edge-gadget}
\begin{figure}[b]
\centering
\ifFull
\includegraphics[scale=1.25]{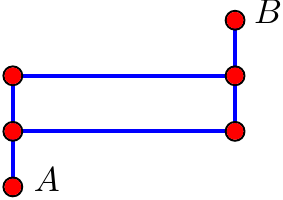}\qquad
\includegraphics[scale=1.25]{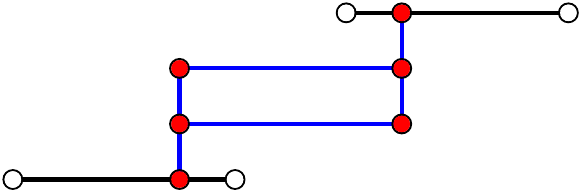}
\else
\includegraphics{figures/cgadget}\qquad
\includegraphics{figures/connected_gadget}
\fi
\caption{The basic version of an edge gadget (left), connecting two horizontal segments representing vertices (right).}
\label{fig:edge-gadget}
\end{figure}
to represent an edge between two vertices of the
input graph. This gadget has six vertices and six line segments; the two vertices $A$ and $B$ of the gadget may be placed on two line segments representing vertices of the input graph. This connection forces the two line segments containing $A$ and $B$ to be placed on different rows of any compacted drawing, even if these two line segments have no vertical overlap with each other: one of the two line segments must be above the central rectangle of the gadget, and the other must be below the central rectangle, although either of these two orientations is possible.

\begin{figure}[b]
\centering
\ifFull
\includegraphics[scale=0.7]{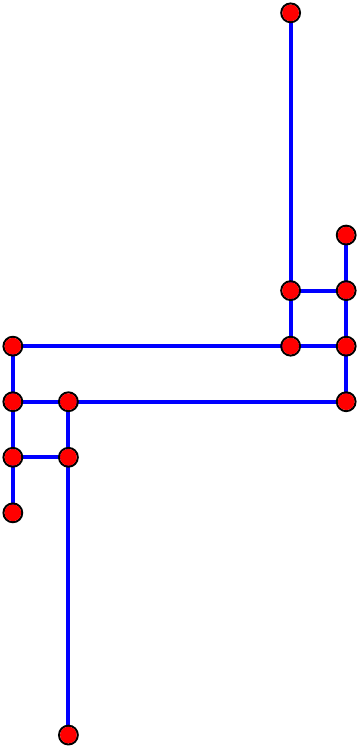}\rule{2em}{0em}
\includegraphics[scale=0.7]{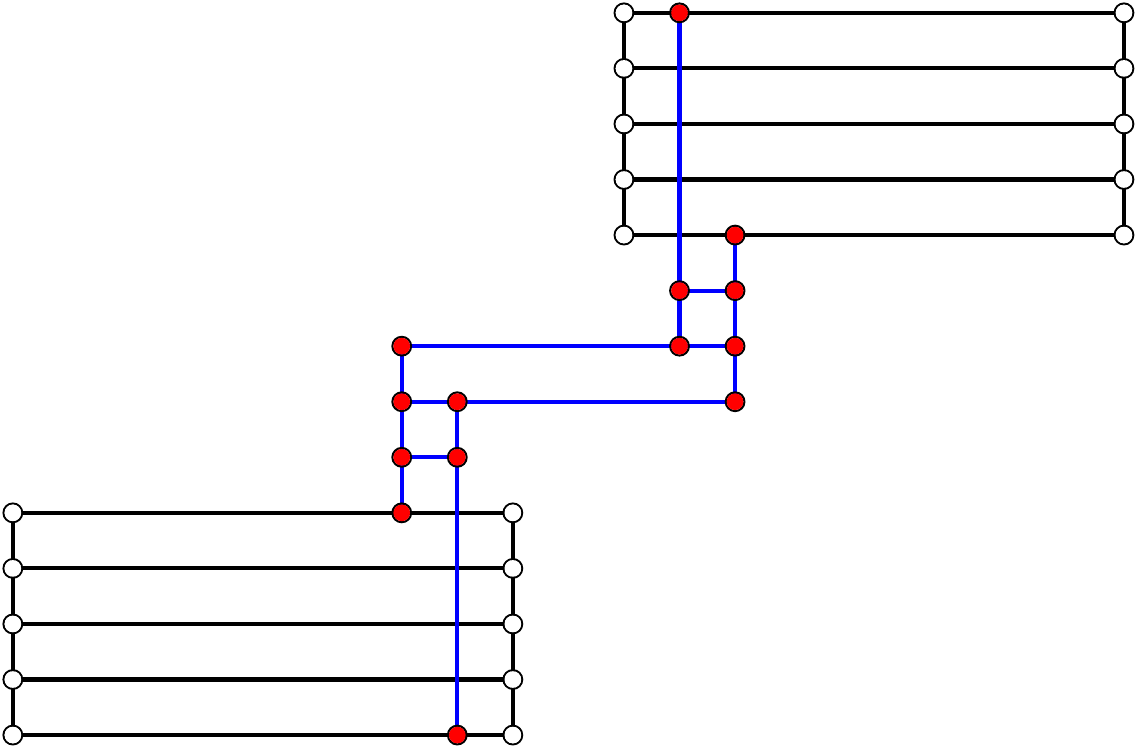}
\else
\includegraphics[scale=0.65]{figures/mcgadget_new}\rule{2em}{0em}
\includegraphics[scale=0.65]{figures/kconnected_gadget_new}
\fi
\caption{The full edge gadget for $\theta=5$.}
\label{fig:bundle-gadget}
\end{figure}

The use of these edge gadgets leads to a second difficulty in our reduction: the number of rows in the compacted drawing will depend both on the features coming from input graph vertices and the rows needed by the edge gadgets themselves. In order to make the first of these two terms dominate the total, we represent an input graph vertex by a \emph{vertex bundle} of $\theta$ parallel line segments, for some integer $\theta>0$. The edge gadgets may be modified to enforce that all segments in one bundle be in different rows from all segments of a second bundle, as shown in Figure~\ref{fig:bundle-gadget}, while only using a constant number of rows for the gadget itself.

\begin{figure}[t]
\centering
\includegraphics[width=0.275\textwidth]{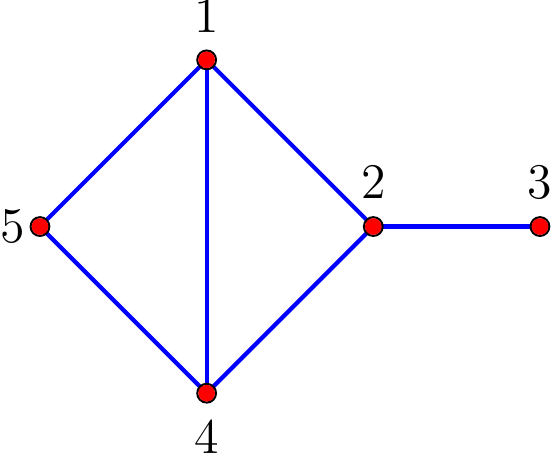}\hfill
\includegraphics[width=0.7\textwidth]{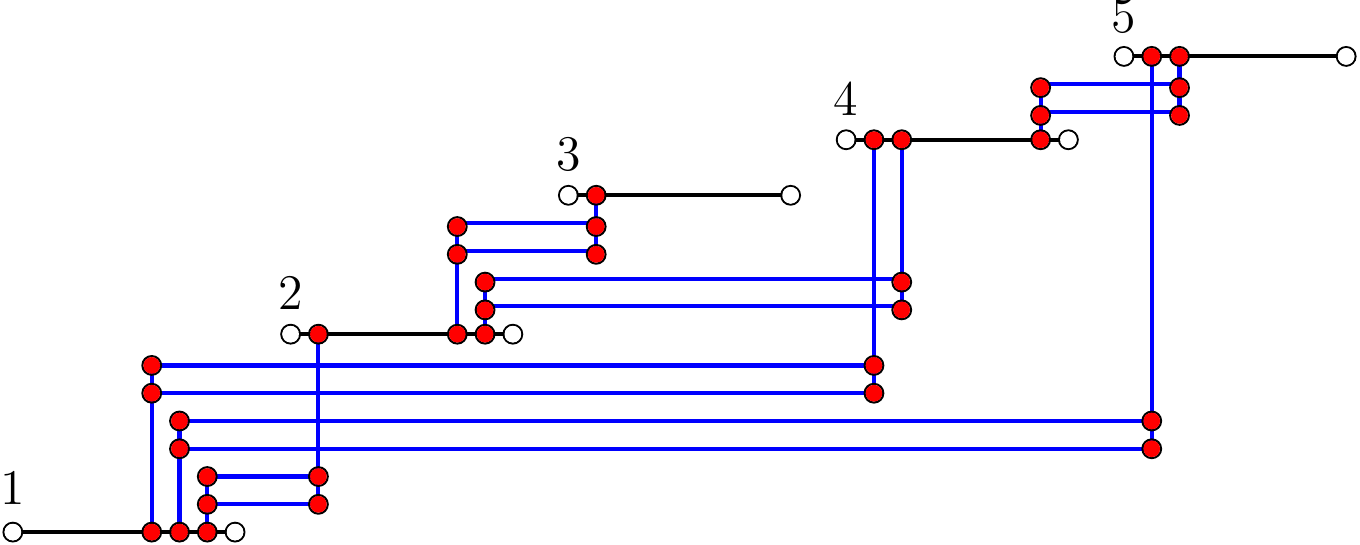}
\caption{Example of the complete reduction for $\theta=1$.}
\label{fig:vvredux}
\end{figure}

\begin{figure}[t]
\centering
\includegraphics[width=0.275\textwidth]{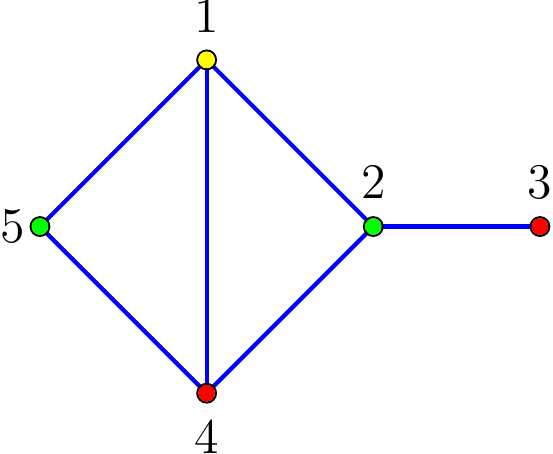}\hfill
\includegraphics[width=0.7\textwidth]{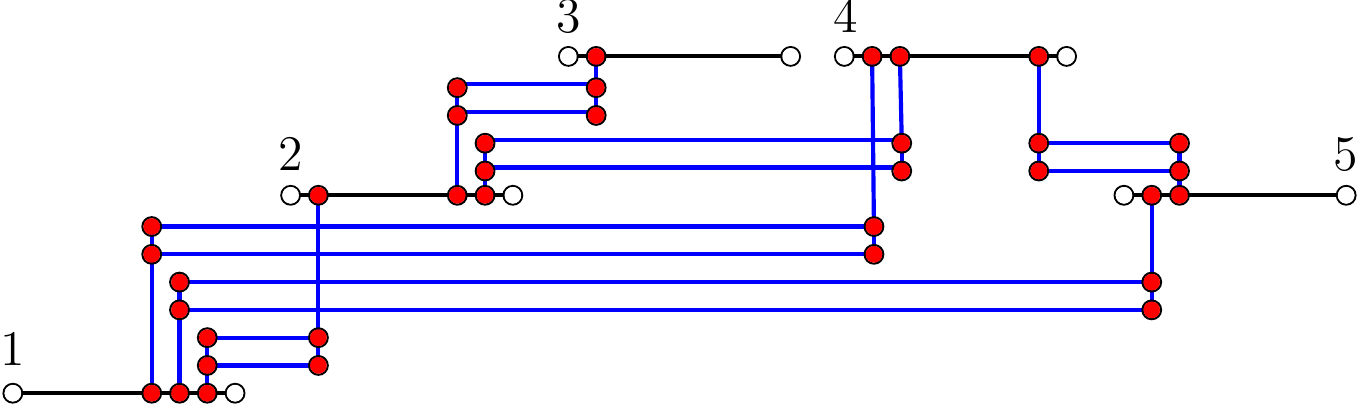}
\caption{Example coloring based on the compaction of the drawing in Figure~\ref{fig:vvredux}.}
\label{fig:vvredux-c}
\end{figure}

Figure~\ref{fig:vvredux} shows the complete reduction, for a graph $G$ with five vertices and six edges, and for $\theta=1$. Each vertex of $G$ is represented as a horizontal black line segment (or bundle of segments, for $\theta>1$), and each edge of $G$ is represented by an edge gadget. The vertices of $G$ are numbered arbitrarily from $1$ to $n_G$, and these numbers are used to assign vertical positions to the corresponding bundles of segments in the drawing. The edge gadgets are given $x$-coordinates that allow them to attach to the two vertex bundles they should be attached to, and $y$-coordinates that place them between these two vertex bundles.

\begin{lemma}
\label{unrestricted lemma}
Given a graph $G$ and a parameter $\theta$ we can construct in polynomial time an orthogonal drawing $D$ such that the vertices of $D$ have maximum degree~3, $n_D = O(\theta n_G + m_G)$, and
$$\theta \chi(G)\le \lambda(D) \le \theta \chi(G)+O(n_G^2).$$
\end{lemma}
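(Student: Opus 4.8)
The plan is to make precise the construction sketched around Figures~\ref{fig:edge-gadget}--\ref{fig:vvredux-c} and then prove the two inequalities separately: the upper bound by building a good compaction out of a colouring of $G$, and the lower bound by extracting a colouring of $G$ from an arbitrary compaction. For the construction I would number the vertices of $G$ as $v_1,\dots,v_{n_G}$; draw each $v_i$ as a vertex bundle $\beta_i$ of $\theta$ parallel horizontal segments tied together into a connected degree-$\le 3$ subdrawing (so that the vertical extent of the whole bundle is controlled by its spine), placed at its own vertical level of $D$; and, for each edge $v_iv_j$, insert one copy of the bundle version of the edge gadget, giving it a private range of $x$-coordinates so that it attaches to $\beta_i$ and $\beta_j$ and lies vertically between them. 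That this takes polynomial time, that every vertex has degree at most $3$, and that $n_D=O(\theta n_G+m_G)$ follow by direct inspection and a counting argument, so I would dispatch them quickly.

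For the upper bound $\lambda(D)\le\theta\chi(G)+O(n_G^2)$ I would fix an optimal proper colouring $c\colon V(G)\to\{1,\dots,\chi(G)\}$ and build a valid vertical compaction with at most that many rows. Partition the rows into $\chi(G)$ colour bands of exactly $\theta$ rows, separated from one another (and bordered above and below) by blocks of spare rows, and place the $\theta$ segments of $\beta_i$ inside band $c(v_i)$; since $c$ is proper, adjacent bundles land in distinct, vertically separated bands. Route each edge gadget's central rectangle and horizontal body into $O(1)$ spare rows lying between the two bands of its endpoints, using its private $x$-range, and connect it to the two spines by vertical arms passing through the intervening bands; the only new incidences created are edge crossings, which the model allows. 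There are $m_G$ gadgets, each using $O(1)$ spare rows, so $O(m_G)=O(n_G^2)$ spare rows in total, giving $\lambda(D)\le\theta\chi(G)+O(m_G)$.

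For the lower bound, let $D'$ be any valid vertical compaction of $D$ and $R$ its number of rows; I want $R\ge\theta\chi(G)$, which gives $\lambda(D)\ge\theta\chi(G)$. Two structural facts carry the argument. First, each bundle $\beta_i$ uses at least $\theta$ distinct rows of $D'$: its $\theta$ horizontal segments sit at $\theta$ distinct integer positions along a common spine, and these cannot coincide. Second, for every edge $v_iv_j$ the central rectangle of the corresponding gadget separates $\beta_i$ from $\beta_j$ — every row touched by one bundle is strictly above every row touched by the other. Granting both, let $a_i$ be the lowest row used by $\beta_i$; then $a_i\in\{1,\dots,R-\theta+1\}$ by the first fact, and for an edge $v_iv_j$ with $\beta_i$ above $\beta_j$ the two facts give $a_i\ge(\text{highest row of }\beta_j)+1\ge a_j+\theta$, so $|a_i-a_j|\ge\theta$. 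Partition $\{1,\dots,R-\theta+1\}$ into $\lceil(R-\theta+1)/\theta\rceil$ runs of at most $\theta$ consecutive integers; since a run has diameter less than $\theta$, the map sending $v_i$ to the run containing $a_i$ is a proper colouring of $G$, whence $\chi(G)\le\lceil(R-\theta+1)/\theta\rceil\le R/\theta$ and $R\ge\theta\chi(G)$.

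The hard part is the second structural fact: proving that in \emph{every} valid compaction the edge gadget really does keep its two bundles on opposite sides of its central rectangle. This is precisely the behaviour asserted informally for the basic gadget of Figure~\ref{fig:edge-gadget} (``one of the two line segments must be above the central rectangle\dots and the other must be below'') and for the bundle gadget of Figure~\ref{fig:bundle-gadget}, and establishing it rigorously requires a finite case analysis: enumerate the combinatorially distinct ways the gadget's segments can be placed while preserving each segment's orientation and avoiding overlaps, and check in each case that the attachment vertices — and therefore the whole bundles, which are connected and lie within the span forced by their spines — wind up separated by the rectangle. The first structural fact and the remaining ingredients (the $x$-coordinate bookkeeping, the degree and size counts, the routability of the gadgets in the upper-bound layout) are routine.
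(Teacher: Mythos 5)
Your construction, size and degree counts, and upper bound are essentially the paper's: colour bands of $\theta$ rows plus spare rows for the gadgets (the paper budgets $O(n_G)$ rows between consecutive bands, you budget $O(1)$ private rows per gadget; both yield $\lambda(D)\le\theta\chi(G)+O(n_G^2)$). Where you genuinely diverge is the lower bound. The paper orients each edge of $G$ from the vertex whose bundle lies below the gadget's rectangle to the one whose bundle lies above it, notes that this orientation is acyclic, and colours each vertex by the longest-path length from a source (a Gallai--Roy-style argument): the longest directed path has $L\ge\chi(G)$ vertices, its bundles are totally ordered vertically, and they alone occupy $\theta L\ge\theta\chi(G)$ rows. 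You instead take the lowest row $a_i$ of each bundle, use the gadget's separation property to get $|a_i-a_j|\ge\theta$ across each edge, and quotient the row range into blocks of $\theta$ consecutive values to read off a proper colouring with at most $R/\theta$ colours, hence $R\ge\theta\chi(G)$. Both arguments are correct; yours is more elementary (no acyclic orientation or longest-path machinery is needed), while the paper's makes the ``stacked bundles along a path'' picture explicit, which is exactly what your edge-by-edge inequality encodes. One caveat applies to both: everything rests on the claim that in every valid compaction the full gadget of Figure~\ref{fig:bundle-gadget} forces one entire bundle above its central rectangle and the other entirely below. You explicitly defer this to a finite case analysis and do not carry it out; the paper likewise asserts it in the gadget description rather than inside the lemma's proof, so you match its level of rigor, but for a self-contained write-up that case analysis (together with the easy observation that the $\theta$ horizontally overlapping segments of a bundle must occupy $\theta$ distinct rows) is the piece to supply.
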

\begin{proof}
The construction of $D$ is as described above. Since each edge gadget has a constant number of vertices together they contribute $O(m_G)$ to $n_D$, and each vertex bundle has $O(\theta)$ vertices together they contribute $O(\theta n_G)$ to $n_D$. Hence $n_D = O(\theta n_G + m_G)$. The bound on the degree is clear from the construction. If $G$ has a coloring with $\chi$ colors, it is possible to assign the vertex bundles of $D$ to $\chi$ sets of $\theta$ rows each, according to those colors, with an additional $O(n_G)$ rows between any two such sets to allow room for the edge gadgets to be placed without interference with each other; see Figure~\ref{fig:vvredux-c}. Therefore, $ \lambda(D)\le \theta \chi(G)+O(n_G^2)$.
If $D'$ is a compacted drawing of $D$, acyclically orient the edges of $G$ from the vertex whose bundle is below the edge gadget to the vertex whose bundle is above the edge gadget, and assign each vertex $v$ in $G$ a color indexed by the length of the longest path from a source to $v$ in this acyclic orientation. Then the number of colors needed equals the number of vertices in the longest path, and the number of rows in $D'$ needed just for the vertices in this path is $\theta$ times the number of vertices of $G$ in the path. Therefore, $\theta\chi(G) \le \lambda(D)$.
\end{proof}
\begin{theorem}
If $\P \neq \NP$, then no polynomial time algorithm approximates the number of
rows or the area in an optimal vertex-by-vertex vertical compaction of a given orthogonal graph drawing to within a factor of $n_D^{1/3 - \epsilon}$.
\end{theorem}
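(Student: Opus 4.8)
The plan is to reuse the reduction-from-coloring template of Theorem~\ref{thm:row-by-row}, but applied to the drawing produced by Lemma~\ref{unrestricted lemma} with the bundle parameter set to $\theta = n_G^2$. Suppose for contradiction that a polynomial time algorithm $\mathcal{A}$ compacts every orthogonal drawing to within a factor $\rho \le n_D^{1/3-\epsilon}$ of the optimal number of rows. I would build a coloring algorithm $\mathcal{A}'$ that, given an $n_G$-vertex graph $G$, constructs the drawing $D$ of Lemma~\ref{unrestricted lemma} with $\theta = n_G^2$, runs $\mathcal{A}$ on $D$ to obtain a compacted drawing $D'$, and then extracts a proper coloring of $G$ from $D'$ exactly as in the proof of Lemma~\ref{unrestricted lemma}: acyclically orient each edge of $G$ from the endpoint whose bundle lies below the corresponding edge gadget to the endpoint whose bundle lies above it, and give each vertex the color indexed by the length of the longest directed path ending at it.

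The next step is to track the parameters. With $\theta = n_G^2$, Lemma~\ref{unrestricted lemma} gives $n_G^2\chi(G) \le \lambda(D) \le n_G^2\chi(G) + O(n_G^2) = O(n_G^2\chi(G))$, using $\chi(G)\ge 1$, and $n_D = O(\theta n_G + m_G) = O(n_G^3)$. The compaction $D'$ output by $\mathcal{A}$ uses at most $\rho\,\lambda(D)$ rows, so by the longest-path argument of Lemma~\ref{unrestricted lemma} the number of colors $\mathcal{A}'$ assigns is at most $\rho\,\lambda(D)/\theta = O(\rho\,\chi(G))$. Hence $\mathcal{A}'$ approximates $\chi(G)$ within a factor $O(\rho) = O(n_D^{1/3-\epsilon}) = O(n_G^{1-3\epsilon})$, which is below $n_G^{1-2\epsilon}$ for all large enough $n_G$, contradicting Theorem~\ref{lem:chi-rho}. (Small graphs can be colored optimally by brute force and do not affect the asymptotic statement.) This settles the number-of-rows objective.

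For the area objective, observe that vertex-by-vertex vertical compaction freezes every $x$-coordinate, so every compaction of $D$ has the same width $w$, a fixed polynomial quantity; after normalizing a compaction so that the rows it uses are consecutive integers (which only shortens vertical segments and never creates a conflict), its area is $w$ times one less than its number of rows. Thus an algorithm approximating the minimum area within a factor $\rho$ also approximates the minimum number of rows within essentially $\rho$, and the argument above applies unchanged; the free-compaction variant follows the same way since freezing $x$-coordinates only restricts the adversary.

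The only point that needs care is the calibration of $\theta$. It must be large enough that the additive $O(n_G^2)$ slack in Lemma~\ref{unrestricted lemma} is dominated by $\theta\chi(G)$ --- automatic here because $\chi(G)\ge 1$ --- yet small enough that $n_D = O(\theta n_G)$ remains cubic in $n_G$, so that the $n_D^{1/3}$ loss incurred by $\mathcal{A}$ translates into a roughly linear-in-$n_G$ loss for coloring; the choice $\theta = n_G^2$ is the one that balances these two requirements. Everything else is a direct transcription of the proof of Theorem~\ref{thm:row-by-row}.
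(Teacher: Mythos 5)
Your proposal is correct and follows essentially the same route as the paper: apply Lemma~\ref{unrestricted lemma} with $\theta=n_G^2$, run the hypothetical compaction algorithm, and read off a coloring via the acyclic-orientation/longest-path argument, yielding an $O(n_G^{1-3\epsilon})$ coloring ratio that contradicts Theorem~\ref{lem:chi-rho}; your explicit handling of the area objective (width is frozen in vertical compaction, so area is proportional to the number of rows) is a detail the paper leaves implicit. The only quibble is your parenthetical claim that the free-compaction variant ``follows the same way'': that case lies outside this theorem, and in the paper it requires the width-fixing gadget of Lemma~\ref{lem:fixbar} before area minimization can be identified with row minimization, since free compaction can shrink the width.
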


\begin{proof}
If an algorithm could achieve this approximation ratio for compaction, we could get an $O(n_G^{1-3\epsilon})$ ratio for coloring by applying Lemma~\ref{unrestricted lemma} with \mbox{$\theta=n_G^2$}, compacting the resulting drawing, and using the coloring derived from the compaction in the proof of Lemma~\ref{unrestricted lemma}, contradicting Theorem~\ref{lem:chi-rho}.
\end{proof}
\newpage
\begin{lemma}
\label{lem:vxvlong}
Given a graph $G$ and parameters $\theta$ and $\phi$ we can construct in polynomial time an orthogonal drawing $D$ such that the vertices of $D$ have maximum degree $3$, $n_D = O(\theta n_G + m_G + \phi)$, with the bounds
\[
\theta \chi(G) \leq \mu(D) \leq \theta\chi(G) + O(n_G^2) \quad\text{when}\quad \theta > 6n_G^2
\]
on the longest edge, and the bounds
\[
\tau(D) = \Theta(n_G^4) \chi(G)
\quad\text{when}\quad \theta,\phi = \Theta(n_G^2)
\]
on the total edge length.
\end{lemma}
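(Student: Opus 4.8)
The plan is to build $D$ by wrapping the drawing of Lemma~\ref{unrestricted lemma} (applied to $G$ with parameter $\theta$) in a \emph{frame} playing the role of the frame in Lemma~\ref{lem:rxrframe}: two long horizontal \emph{borders}, one above and one below the entire Lemma~\ref{unrestricted lemma} construction and spanning all of its columns, joined on one side by $\phi$ parallel vertical edges, and attached to the construction so that in every compaction the construction is confined to the rows strictly between the two borders (compare Figure~\ref{fig:path_example_edgeLength} for the analogous row-by-row picture). Each border is drawn as a horizontal path, so under vertex-by-vertex vertical compaction each border stays a rigid horizontal line, and the $\phi$ connecting edges then all have length exactly equal to the vertical separation of the two borders. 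Routing the path and its attachments so that no vertex is used by more than three edges keeps the maximum degree at three, and since the frame adds only $O(\phi)$ new vertices beyond those of Lemma~\ref{unrestricted lemma} we get $n_D = O(\theta n_G + m_G + \phi)$.

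For the longest edge, the upper bound comes from the layout built from an optimal $\chi(G)$-coloring exactly as in Lemma~\ref{unrestricted lemma} (Figure~\ref{fig:vvredux-c}), which uses $\theta\chi(G)+O(n_G^2)$ rows; placing the two borders just above and below it makes every edge -- the bounded-size segments inside the edge gadgets, the horizontal bundle segments (whose lengths are fixed under vertical compaction), the vertical segments joining a bundle to a gadget, and the $\phi$ frame edges -- have length at most the total height $\theta\chi(G)+O(n_G^2)$, and the hypothesis $\theta>6n_G^2$ makes the $\theta\chi(G)$ term dominate this $O(n_G^2)$ slack. For the lower bound I would reuse the argument of Lemma~\ref{unrestricted lemma}: in any compaction, orienting each edge of $G$ from the bundle below its gadget to the bundle above it yields an acyclic orientation whose longest path has at least $\chi(G)$ vertices, and the corresponding bundles are stacked one strictly above the next, separated by the gadgets' central rectangles, so they occupy at least $\theta\chi(G)$ rows; since the construction lies strictly between the two borders, each frame edge has length at least $\theta\chi(G)$, giving $\mu(D)\ge\theta\chi(G)$.

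For the total edge length with $\theta,\phi=\Theta(n_G^2)$, the lower bound is immediate: the $\phi=\Theta(n_G^2)$ frame edges each have length at least $\theta\chi(G)=\Theta(n_G^2)\,\chi(G)$, so $\tau(D)=\Omega(n_G^4)\,\chi(G)$. For the matching upper bound I would again take the coloring-based layout and add up the horizontal and vertical edge lengths: its width is $O(n_G^2)$, its height is $\theta\chi(G)+O(n_G^2)=O(n_G^2)\,\chi(G)$, the vertical edges in any one column are pairwise disjoint and so contribute at most the height, and every edge other than the $\phi$ frame edges is either short or confined to the construction; summing these contributions gives $\tau(D)=O(n_G^4)\,\chi(G)$, and together with the lower bound $\tau(D)=\Theta(n_G^4)\,\chi(G)$.

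The main obstacle is the confinement claim, i.e.\ showing that in every \emph{vertex-by-vertex} compaction no vertex, bend, or edge of the construction can be moved onto or past a border, so that the frame-edge length really does measure the number of rows the construction is forced to occupy. In Lemma~\ref{lem:rxrframe} this followed from an all-or-nothing argument available only because entire rows move together; here, with vertices moving independently, I expect to need to make the borders impenetrable -- for instance by replacing each single-path border with a narrow grid-like strip whose lattice vertices block any crossing segment, and/or by adding an edge gadget from each vertex bundle to each border -- and then to re-verify that this added structure still contributes only $O(\phi)$ vertices and leaves the maximum degree equal to three.
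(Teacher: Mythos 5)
Your overall plan coincides with the paper's: wrap the construction of Lemma~\ref{unrestricted lemma} in a frame whose $\phi$ vertical edges measure the number of occupied rows, get the upper bounds from the explicit coloring-based layout, get the lower bound from the acyclic-orientation argument, and sum contributions for $\tau(D)$. But the step you yourself flag as ``the main obstacle'' --- that in every vertex-by-vertex vertical compaction the construction must lie strictly between the two frame rows --- is exactly the step you have not supplied, and it is the only nontrivial step: without it the $\phi$ frame edges need not be long, and since every individual edge inside the construction can be short, $\mu(D)$ could a priori be far below $\theta\chi(G)$. Your proposed patches do not close this gap. An edge gadget between a bundle and a border only forces the two onto \emph{different} rows, with either one allowed on top, so it cannot force ``between.'' And intuitions about an ``impenetrable'' grid-like border blocking crossing segments are suspect here because a compaction is a reassignment of coordinates, not a continuous motion, and crossings are explicitly permitted; any valid argument must be phrased in terms of forbidden overlaps and vertex--edge incidences in the \emph{final} drawing (this is precisely the failure mode illustrated by the incompatibility-graph counterexample of Figure~\ref{fig:cexamp}).

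The paper's construction achieves confinement by how the frame is attached, not by making it thick: the top and bottom frame vertices are placed in the same column as the leftmost column of each vertex bundle and joined to it by vertical edges (Figure~\ref{fig:longEdge_VxV}), so that frame vertex, bundle spine, and frame vertex form a path all of whose segments are vertical and share one $x$-coordinate. In any valid drawing such a collinear path must be $y$-monotone, since otherwise two of its segments would overlap for nonzero length, which is forbidden; hence every bundle (and in particular the $\theta\chi(G)$ rows forced by the longest-path argument of Lemma~\ref{unrestricted lemma}) is sandwiched strictly between the two frame rows, the frame's horizontal edges stay short (length at most about twice the degree of the corresponding vertex of $G$), and each of the $\phi$ right-hand vertical edges has length at least $\theta\chi(G)$, giving $\mu(D)=\lambda(D)-1$ and the stated bounds. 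Until you replace your ``impenetrable border'' sketch with an attachment mechanism of this kind (or prove your grid-strip variant correct while re-checking the degree-three and vertex-count claims), the lower bounds $\theta\chi(G)\le\mu(D)$ and $\tau(D)=\Omega(n_G^4)\chi(G)$ are unsupported; the rest of your argument (the coloring-based upper bounds and the length summation) matches the paper and is fine.
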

\begin{proof}
We add two rows of vertices one above and one below the drawing produced by
Lemma~\ref{unrestricted lemma} (illustrated in Figure~\ref{fig:vvredux}) by
placing vertices above and below the left most column of vertices in each vertex bundle.
We also place $\phi$ additional vertices on the top and the bottom to the right of the original drawing. The vertices are then connected as illustrated by Figure~\ref{fig:longEdge_VxV}.

First, we consider the bounds on the longest edge. For these bounds we let $\phi = 1$ and $\theta > 6n^2$. The width of the original drawing (from Lemma~\ref{unrestricted lemma}) is $2n_G$ for the vertex bundle and $4m_G$ for connecting the edge gadgets to the vertex bundles; see the right side of Figure~\ref{fig:bundle-gadget}. So the width of the original drawing is bounded above by $6n_G^2$. When adding the additional frame (the green vertices with purple edge in Figure~\ref{fig:longEdge_VxV}), all added horizontal edges have length at most $2n_G$, more precisely twice the degree of the vertex being represented by the bundle. Now, by construction the vertical edge on far right is the longest vertical edge in the drawing, and has length at least
\(
\theta \chi(G) + O(m_G) > 6n_G^2
\)
by Lemma~\ref{unrestricted lemma}, making it the longest edge in the drawing even after compaction. So we have that $\mu(D) = \lambda(D) - 1$ for the constructed drawing, i.e., minimizing the longest edge in the new drawing is equivalent minimizing the number of rows in the original drawing. The bounds for longest edge now follow from the bounds in Lemma~\ref{unrestricted lemma}.

For the bounds on total edge length, we let $\theta, \phi = \Theta(n_G^2)$. Each vertex bundle contributes $O(\theta n_G)$ for a total of $O(\theta n_G^2$). The height of each edge gadget is $O(\theta \chi(G))$ from the vertex bundle plus $O(m_G)$ from the other edge gadgets. The horizontal contribution from each edge gadgets is $O(n_G^2)$, as the width of the drawing is at most $6n_G^2$. Altogether each edge gadget contributes $O(\theta\chi(G) + n_G^2)$ for a total of $O(n_G^2 \theta \chi(G) + n_G^4)$. The added frame contributes $O(n_G^2 + \phi)$ in horizontal edge length, and $\phi\theta\chi(G)$ in vertical edge length. Since $\phi = \Theta(n_G^2)$ the vertical lines holding the the drawing in the frame do not contribute to the asymptotic complexity. Summing up the total edge length,
\[
\tau(D) = O(\theta n_G^2) + O(n_G^2\theta \chi(G) + n_G^4) + \phi\theta\chi(G) = \Theta(n_G^4) \chi(G),
\]
we get the desired bounds.
\end{proof}

\begin{figure}[t]
\centering
\includegraphics[width=\textwidth]{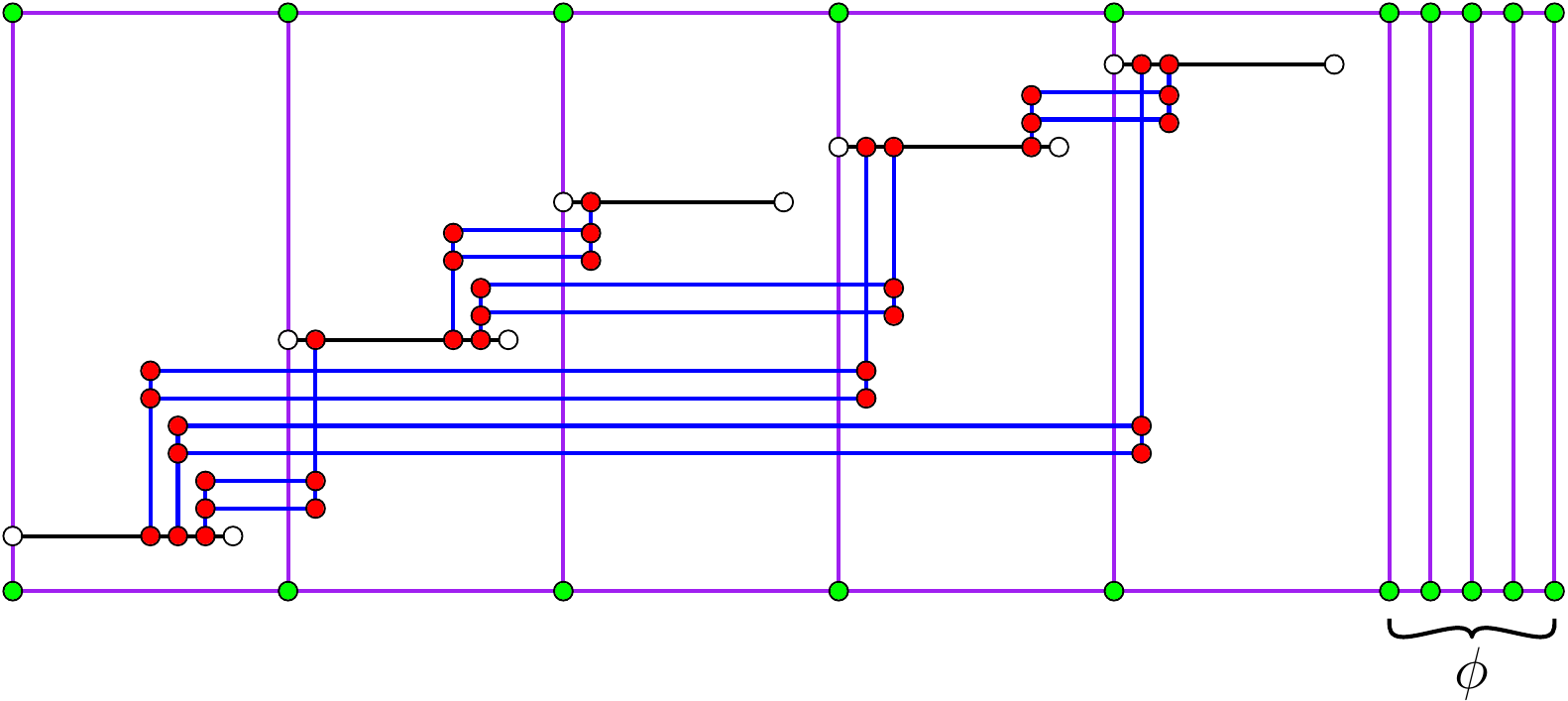}
\caption{
\label{fig:longEdge_VxV}
Long-edges-frame added to the drawing in Figure~\ref{fig:vvredux}}
\end{figure}

\begin{theorem}
If $\P \neq \NP$, then no polynomial time algorithm approximates the length of the longest edge or total edge length in an optimal vertex-by-vertex vertical compaction of a given orthogonal graph drawing to within a factor of $n_D^{1/3 - \epsilon}$.
\end{theorem}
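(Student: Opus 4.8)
The plan is to mirror the proofs of Theorems~\ref{thm:row-by-row} and~\ref{thm:row-by-row-edgeSum}, replacing the row-by-row constructions with the vertex-by-vertex construction of Lemma~\ref{lem:vxvlong} and reusing the coloring--compaction correspondence from the proof of Lemma~\ref{unrestricted lemma}. Suppose for contradiction that a polynomial time algorithm $\mathcal{A}$ approximates the longest edge length (resp.\ the total edge length) of an optimal vertex-by-vertex vertical compaction to within a factor $\rho \le n_D^{1/3-\epsilon}$. I would build a polynomial time coloring algorithm $\mathcal{A}'$ that, given $G$, applies Lemma~\ref{lem:vxvlong} with an appropriate choice of $\theta$ and $\phi$, runs $\mathcal{A}$ on the resulting drawing $D$, and converts the compacted drawing $D'$ back into a coloring of $G$ exactly as in Lemma~\ref{unrestricted lemma} (orient each edge of $G$ from the bundle below its edge gadget to the bundle above, and color each vertex by the length of the longest source-to-vertex path). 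It then suffices to show the approximation ratio of $\mathcal{A}'$ is $O(n_G^{1-3\epsilon})$, which contradicts Theorem~\ref{lem:chi-rho}.

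For the longest-edge case I would take $\phi = 1$ and $\theta = \Theta(n_G^2)$ chosen so that $\theta > 6 n_G^2$ (say $\theta = 7 n_G^2$), so that Lemma~\ref{lem:vxvlong} gives $\theta\chi(G) \le \mu(D) \le \theta\chi(G) + O(n_G^2)$ and $n_D = O(\theta n_G + m_G + \phi) = O(n_G^3)$, hence $\rho \le n_D^{1/3-\epsilon} = O(n_G^{1-3\epsilon})$. Because the frame traps the bundle/edge-gadget part of any compaction $D'$ strictly between the top and bottom frame rows and the far-right vertical edge (of length $\mu(D')$, being longest) spans from one frame row to the other, the number of rows occupied by bundle segments in $D'$ is $\mu(D') + O(1)$, so the extracted coloring uses at most $\mu(D')/\theta + O(1)$ colors. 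Since $\mu(D') \le \rho\,\mu(D) \le \rho\bigl(\theta\chi(G) + O(n_G^2)\bigr)$, $\theta = \Theta(n_G^2)$ and $\chi(G) \ge 1$, this is $O(\rho\,\chi(G)) = O(n_G^{1-3\epsilon})\,\chi(G)$; note that here the additive $O(n_G^2)$ term costs only a constant factor rather than $1+o(1)$, but a constant factor is harmless, so Theorem~\ref{lem:chi-rho} is contradicted.

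For the total-length case I would instead take $\theta, \phi = \Theta(n_G^2)$, so that Lemma~\ref{lem:vxvlong} gives $\tau(D) = \Theta(n_G^4)\,\chi(G)$ while still $n_D = O(n_G^3)$ and $\rho = O(n_G^{1-3\epsilon})$. Fixing constants $c_1 \le c_2$ with $c_1 n_G^4\,\chi(G) \le \tau(D) \le c_2 n_G^4\,\chi(G)$, I would have $\mathcal{A}'$ run $\mathcal{A}$ to get a compaction $D'$ with $\tau(D) \le \tau(D') \le \rho\,\tau(D)$ and output $\lceil \tau(D')/(c_1 n_G^4)\rceil$ as its estimate of $\chi(G)$. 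This quantity is at least $\tau(D)/(c_1 n_G^4) \ge \chi(G)$ and at most $\rho\,\tau(D)/(c_1 n_G^4) + 1 \le (c_2/c_1)\rho\,\chi(G) + 1 = O(\rho\,\chi(G))$, so $\mathcal{A}'$ again achieves ratio $O(\rho) = O(n_G^{1-3\epsilon})$, contradicting Theorem~\ref{lem:chi-rho}.

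The genuinely delicate work --- designing the long-edge frame, verifying that the construction of Lemma~\ref{unrestricted lemma} must lie between the two frame rows under any vertical compaction, and confirming that the far-right vertical edge is the longest edge even after compacting --- has already been done inside Lemma~\ref{lem:vxvlong}. So the only thing that needs care in the theorem's proof is the choice of $\theta$ and $\phi$ and the bookkeeping of the $O(\cdot)$ and $\Theta(\cdot)$ constants, ensuring the $\chi(G)$ term dominates the lower-order additive terms while $n_D$ stays $O(n_G^3)$; this is exactly the balance that makes the $n_D^{1/3-\epsilon}$ compaction ratio translate into a sub-$n_G^{1-\epsilon'}$ coloring ratio. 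I expect no obstacle beyond this routine parameter tuning.
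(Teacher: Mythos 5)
Your proposal is correct and follows essentially the same route as the paper: the paper's (very terse) proof likewise applies Lemma~\ref{lem:vxvlong} with $\theta=7n_G^2$, $\phi=1$ for the longest-edge case and $\theta=\phi=\Theta(n_G^2)$ for the total-length case, and converts a compaction ratio of $n_D^{1/3-\epsilon}=O(n_G^{1-3\epsilon})$ into a coloring/chromatic-number approximation contradicting Theorem~\ref{lem:chi-rho}, just as you do. Your write-up merely makes explicit the bookkeeping (extraction of the coloring, rounding of $\tau(D')$, absorption of the additive $O(n_G^2)$ term) that the paper leaves implicit.
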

\begin{proof}
If an algorithm could achieve this approximation ratio for compaction, we could get an $O(n^{1-3\epsilon})$ ratio for coloring by applying Lemma~\ref{lem:vxvlong} with $\theta = 7n^2$ and $\phi = 1$ and compacting with respect to longest edge, or by applying Lemma~\ref{lem:vxvlong} with $\theta = \phi = n^2$ and compacting with respect to total edge length.
\end{proof}

\section{Hardness of Vertex-By-Vertex\\ Free Compaction}

In the reduction from the previous section, allowing the vertices to move horizontally as well as vertically does not make any difference in how much vertical compaction is possible. However, if we want to prove inapproximability for minimal-area compaction, we also need to worry about horizontal compaction.
By making the width uncompactable we may make the vertical compaction factor the same
as the area compaction factor.

\begin{lemma}
\label{lem:fixbar}
From a drawing $D$ a drawing $D'$ can be constructed by adding at most
$O(n_D)$ vertices, such that $\lambda(D') = \lambda(D) + 1$ and $D'$ is
uncompactable in the horizontal direction. If $D$ has maximum degree three, then so does $D'$.
\end{lemma}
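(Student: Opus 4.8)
The plan is to surround $D$ with a \emph{rigid bar} that occupies a single new grid row: a path of unit-length horizontal edges spanning the full horizontal extent of $D$, whose two ends are pinned to the two extreme columns of $D$. Concretely, first translate $D$ so that all its vertices and bends lie in $[0,W]\times[0,H]$ with $W,H=O(n_D)$. Add $W+1$ new vertices $q_0,\dots,q_W$, with $q_i$ at $(i,H+1)$, and join consecutive ones by unit horizontal edges. Let $L$ and $R$ be topmost vertices of $D$ in columns $0$ and $W$, and join $q_0$ to $L$ and $q_W$ to $R$ by vertical edges routed just above $D$; if $L$ or $R$ already has degree $3$, first subdivide one of its incident segments and attach to the resulting degree-$2$ vertex (legitimate, since bends and vertices are interchangeable here). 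This adds $O(n_D)$ vertices, every new vertex has degree at most $3$, and we attach to $D$ only at vertices of degree at most $2$, so the maximum degree of $D'$ is no larger than that of $D$.

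For horizontal uncompactability, take any compaction $D''$ of $D'$. Every bar edge stays horizontal, so $q_0,\dots,q_W$ all receive the same $y$-coordinate in $D''$; on a single horizontal line a simple path cannot reverse the sign of two consecutive steps without two of its edges overlapping along a nonzero segment, and consecutive $q_i$ cannot coincide, so the $x$-coordinates of $q_0,\dots,q_W$ are strictly monotone and the bar spans at least $W$ units. Hence $D''$ has width at least $W$, which equals the width of $D'$, so no compaction of $D'$ is narrower than $D'$.

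For the row count, the bound $\lambda(D')\le\lambda(D)+1$ follows by taking a row-optimal vertical compaction of $D$ (its $x$-coordinates, hence the columns of $L$ and $R$ and the extent $[0,W]$, are unchanged) and placing the bar on one fresh row above, attaching the two vertical edges downward; this is a valid compaction using $\lambda(D)+1$ rows. For $\lambda(D')\ge\lambda(D)+1$: in any compaction $D''$ of $D'$, deleting the bar leaves a compaction of $D$ using at least $\lambda(D)$ rows, and the bar should occupy one further row because its two ends are pinned (through the vertical attaching edges) to the columns of $L$ and $R$, which are at least $W$ apart in $D''$; the bar is then a monotone chain of exactly $W$ unit steps, so it covers every one of those columns, and a vertex of $D$ lying on the bar's row would collide with some $q_i$.

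The delicate point — and the place I expect to do real work — is exactly that last claim in the \emph{free} model: there, the subdrawing $D''|_D$ may permute its columns, so a priori a vertex of $D$ could be laid out past one end of the bar, keeping $D''$ wide yet letting the bar share a row already used by $D$. Ruling this out (for instance by strengthening the anchoring so that every column of the bar is tied to $D$, while still adding only one row and keeping degrees bounded) and, dually, arranging the attaching edges so that they cause no edge overlap inside an already-compacted copy of $D$ when proving the upper bound, are the steps that need genuine care; the monotone-chain width argument and the vertex-count and degree bookkeeping are routine.
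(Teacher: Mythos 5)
Your construction and the width argument are in the same spirit as the paper's, but the proposal is incomplete at exactly the point you flag yourself: you never actually prove $\lambda(D')\ge\lambda(D)+1$ in the vertex-by-vertex free model, you only describe the obstruction you would need to rule out (a vertex of $D$ sharing the bar's row by being placed horizontally beyond the bar's ends, which your two end-anchors do not prevent) and defer it as ``the place I expect to do real work.'' Since the lemma asserts the equality $\lambda(D')=\lambda(D)+1$, leaving that direction open is a genuine gap, not a routine detail. (It is worth noting that for the theorem this lemma feeds, the facts that are easy from your construction --- $\lambda(D)\le\lambda(D')\le\lambda(D)+1$, where the lower bound comes from deleting the bar, together with the width being pinned at $W$ --- are what actually drive the area inapproximability; but as a proof of the stated lemma the argument is unfinished.)

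There is also a quantitative mismatch with the statement: you add $W+1$ new vertices and simply assert $W=O(n_D)$, which a translation cannot guarantee --- a drawing on the integer grid can have width much larger than its number of vertices and bends. The paper avoids this by a slightly different bar: it places one new vertex only at each $x$-coordinate actually occupied by a vertex or bend of $D$ (hence at most $n_D$ new vertices regardless of the width), joins consecutive ones by horizontal edges of the appropriate lengths, puts this row below $D$, and attaches it by a single vertical edge at the rightmost point of $D$'s bottom row; the new row then meets every occupied column, which is what blocks horizontal compaction, and it conflicts with every existing row, which is what the paper uses to account for the extra row. Your double-anchored, unit-spaced bar buys a cleaner monotone-chain width argument, but at the cost of the vertex bound and without resolving the row lower bound; to salvage the proposal you would either need to justify $W=O(n_D)$ (true for the drawings produced by the reduction, but not for arbitrary $D$) or switch to occupied-column spacing, and in either case supply the missing argument that the bar cannot share a row with features of $D$.
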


\begin{figure}[t]
\centering
\ifFull
\includegraphics[scale=1.25]{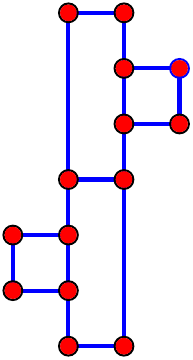}\rule{2em}{0em}
\includegraphics[scale=1.25]{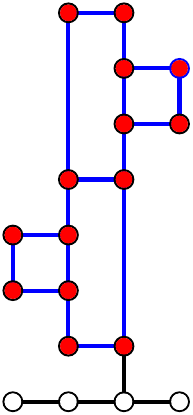}
\else
\includegraphics{figures/area_example}\rule{2em}{0em}
\includegraphics{figures/area_example_bar}
\fi
\caption{Adding a row of vertices to $D$ prevents  horizontal compaction.}
\label{fig:uncompactable}
\end{figure}
\begin{proof}
Place a line of vertices on a new row below $D$; for each set of vertices with a given $x$-coordinate in $D$, add a vertex on the new row at the
same $x$-coordinate. Connect the added
vertices with horizontal edges, and add a vertical edge to connect these vertices to $D$ at the point of $D$ that is rightmost on its bottom row, as shown in Figure~\ref{fig:uncompactable}. This added layer conflicts with all
existing horizontal layers, and forces $D'$ to be uncompactable in the
horizontal direction.
\end{proof}

\begin{theorem}
Unless $\P= \NP$, it is impossible to find vertex-by-vertex free compactions with respect to number of rows, area, longest edge length or total edge length within a factor of $n_D^{1/3-\epsilon}$ of optimal in polynomial time.
\end{theorem}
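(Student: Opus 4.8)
The plan is to recycle, almost verbatim, the vertex-by-vertex vertical reductions of the previous section and to compose them with Lemma~\ref{lem:fixbar}, whose sole purpose is to ``freeze'' the horizontal direction so that a vertex-by-vertex \emph{free} compaction can accomplish nothing that a vertex-by-vertex vertical compaction could not. Two structural observations make this work. First, the lower-bound direction of Lemma~\ref{unrestricted lemma} (acyclically orienting the edge gadgets and coloring each vertex by its longest-path length) is a purely combinatorial argument about the forced above/below orientation of each edge gadget; it uses nothing about $x$-coordinates, so $\theta\chi(G)$ still lower-bounds the optimal free compaction, while the upper bounds hold a fortiori because every vertical compaction is a free compaction. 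Second, since Lemma~\ref{lem:fixbar} makes the width of the augmented drawing uncompactable, the area of any compaction is the fixed width times the number of rows, so an $r$-approximation for area is an $r$-approximation for the number of rows.

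For the number-of-rows and area objectives I would take an input graph $G$, apply Lemma~\ref{unrestricted lemma} with $\theta = n_G^2$ to obtain a degree-$3$ drawing $D$ with $n_D = O(n_G^3)$ and $\theta\chi(G) \le \lambda(D) \le \theta\chi(G) + O(n_G^2)$, and then apply Lemma~\ref{lem:fixbar} to get $D'$ with $n_{D'} = O(n_G^3)$, maximum degree still $3$, $\lambda(D') = \lambda(D)+1$, and width that cannot shrink. By the two observations above, the optimal free compaction value of $D'$ is still sandwiched as $\theta\chi(G)\le \lambda^{*}(D')\le \theta\chi(G)+O(n_G^2)$, and minimizing area is equivalent to minimizing rows. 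Composing a hypothetical polynomial-time $n_{D'}^{1/3-\epsilon}$-approximation with these reductions and reading a coloring of $G$ off the rows of the compaction exactly as in Lemma~\ref{unrestricted lemma}, the arithmetic is identical to the earlier vertex-by-vertex theorem: it yields a coloring within a factor $O\!\left(n_{D'}^{1/3-\epsilon}\right)=O\!\left(n_G^{1-3\epsilon}\right)$ of $\chi(G)$, contradicting Theorem~\ref{lem:chi-rho}.

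For the longest-edge and total-edge-length objectives I would start instead from Lemma~\ref{lem:vxvlong} --- with $\theta = 7n_G^2$, $\phi = 1$ for longest edge, and with $\theta = \phi = n_G^2$ for total length --- and again apply Lemma~\ref{lem:fixbar}; in both cases $n_{D'} = O(n_G^3)$. The same two observations reduce free compaction of $D'$ to vertical compaction of $D'$, so the quantitative bounds of Lemma~\ref{lem:vxvlong} ($\theta\chi(G)\le\mu(D)\le\theta\chi(G)+O(n_G^2)$, respectively $\tau(D)=\Theta(n_G^4)\chi(G)$) carry over to $D'$ up to harmless additions. One must check only that the bottom bar added by Lemma~\ref{lem:fixbar} (Figure~\ref{fig:uncompactable}) neither becomes the longest edge --- its vertical connector has length $1$ and its horizontal edges span only single steps between consecutive occupied columns, far shorter than the $\Theta(n_G^2\chi(G))$ right-hand vertical edge of Lemma~\ref{lem:vxvlong} --- nor disturbs the total-length asymptotics, since it contributes only $O(n_G^2)$, dominated by $\Theta(n_G^4)\chi(G)$. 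Composing as before again contradicts Theorem~\ref{lem:chi-rho}.

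The one genuinely technical point --- the main obstacle --- is the geometric bookkeeping when two frames are wrapped around the drawing of Lemma~\ref{unrestricted lemma}: in the longest-edge/total-length reductions one must place the long-edges frame of Lemma~\ref{lem:vxvlong} (Figure~\ref{fig:longEdge_VxV}) and the horizontal-freezing bar of Lemma~\ref{lem:fixbar} together without creating any forced edge overlap, without inadvertently producing a new longest edge, and while keeping the combined drawing uncompactable in the horizontal direction. Everything else --- the arithmetic of the approximation ratios and the extraction of a coloring from a compaction --- is identical to the arguments already given for the vertex-by-vertex vertical case.
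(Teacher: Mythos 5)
Your proposal is correct and follows essentially the same route as the paper: apply Lemma~\ref{lem:fixbar} to make the width rigid so that area reduces to the number of rows, and observe that the vertex-by-vertex vertical constructions (Lemmas~\ref{unrestricted lemma} and~\ref{lem:vxvlong}) already give the bounds for rows, longest edge, and total edge length because horizontal freedom does not change how much vertical compaction is possible. Your extra step of wrapping the fixing bar around the longest-edge/total-length constructions is harmless but unnecessary---the paper invokes Lemma~\ref{lem:fixbar} only for the area objective---so the ``main obstacle'' you identify does not actually arise.
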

\begin{proof}
For the case of compaction with respect to area we apply Lemma~\ref{lem:fixbar} to reduce compaction with respect to area to compaction with respect to the number of rows. Since we only added a linear number of vertices the approximation ratios are the same. The rest of the theorem follows from the results on vertex-by-vertex vertical compaction.
\end{proof}

\section{Hardness of Three-Dimensional Compaction}
Our hardness result for three-dimensional vertical compaction (along the $z$-axis) follows from the construction of a drawing whose valid two-dimensional layer assignments are the same as the valid colorings of a graph $G$. We assign to each vertex in $G$ a horizontal layer parallel to the $xy$-plane containing an $L$-shaped pair of line segments, such that when projected vertically onto the $xy$-plane every two of these $L$-shaped gadgets cross each other. For each edge in $G$ we place a vertical edge in the drawing connecting the $L$-shaped gadgets that correspond to the endpoints of the edge. Figure \ref{fig:3dredux} shows an example.
\begin{figure}[b]
\centering
\ifFull
\includegraphics[scale=1.2]{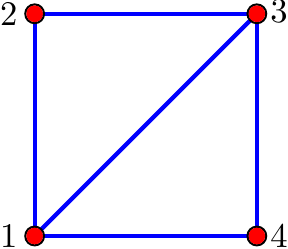}\rule{4em}{0em}
\includegraphics[scale=1]{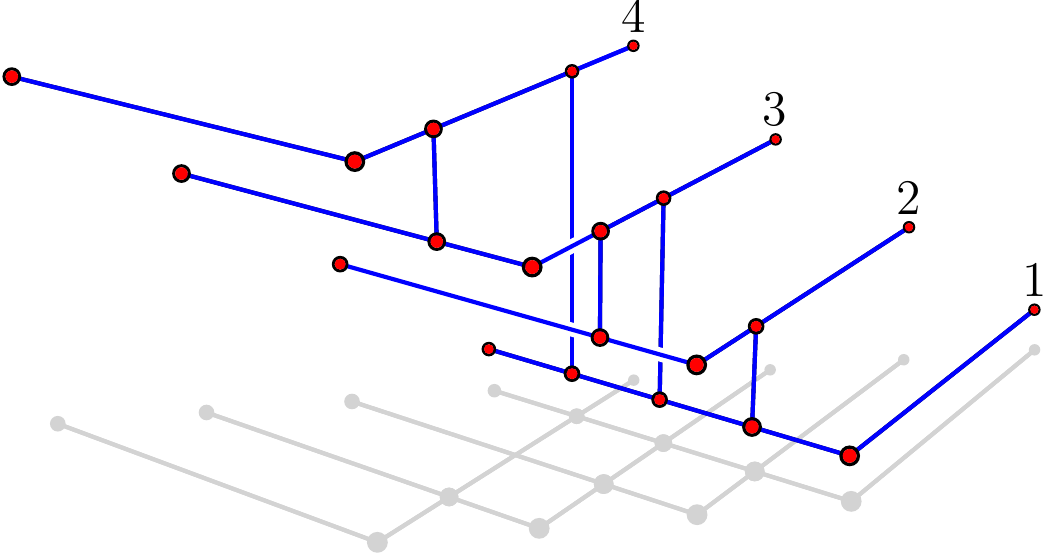}
\else
\includegraphics[scale=1]{figures/3d_example_graph}\rule{4em}{0em}
\includegraphics[scale=0.7]{figures/3d_example_shadow2}
\fi
\caption{Reduction from coloring to three-dimensional compaction where $y$ is the vertical direction.}
\label{fig:3dredux}
\end{figure}

\begin{lemma}
\label{lem:3dconst}
Given a graph $G$ we can construct in polynomial time a $3D$ orthogonal drawing $D$ with maximum degree three such that $n_D = 3n_G+2m_G=O(n_G^2)$, and such that the number of layers in an optimal $y$-compaction is $\chi(G)$.
\end{lemma}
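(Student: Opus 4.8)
The plan is to turn the sketch above into a precise layout and then prove $\lambda(D)\le\chi(G)$ and $\lambda(D)\ge\chi(G)$, writing $\lambda(D)$ for the minimum number of horizontal layers (distinct values of the vertical coordinate $y$) used by a valid $y$-compaction of $D$.

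First I would fix the geometry. Number the vertices of $G$ as $v_1,\dots,v_{n_G}$. For each $i$ let the gadget $L_i$ consist of the segment $\{x=i,\ 0\le z\le i\}$ together with the segment $\{z=i,\ i\le x\le n_G+1\}$, meeting at a right-angled corner at $(i,i)$; lifted into three dimensions, $L_i$ lies in a horizontal plane. A short check shows that the $xz$-projections of $L_i$ and $L_j$ (for $i<j$) meet in the single point $p_{ij}=(j,i)$, that this point is a transverse crossing ($L_i$ runs in the $x$-direction there, $L_j$ in the $z$-direction), that it is interior to a segment of each of $L_i$ and $L_j$, that it is not a corner or endpoint of either, that it lies on no other gadget, and that the points $p_{ij}$ are pairwise distinct. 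For every edge $v_iv_j$ of $G$ I would add a single edge of $D$ parallel to the $y$-axis whose two endpoints are new vertices placed on $L_i$ and on $L_j$ at the common column $p_{ij}$. Each gadget then carries $3$ vertices or bends (two free ends and a corner) and each graph-edge carries $2$ (the endpoints of its vertical edge), so $n_D=3n_G+2m_G=O(n_G^2)$; free ends have degree one, corners degree two, and the new endpoints degree three, so $D$ has maximum degree three.

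For $\lambda(D)\le\chi(G)$, take an optimal proper coloring $c$ of $G$ and place $L_i$ in the horizontal plane $y=c(v_i)$, reusing the same $xz$-pattern in every layer. Because $c(v_i)\ne c(v_j)$ for every edge, each edge gadget becomes a nondegenerate vertical segment, and it remains to check validity. Two gadgets lying in the same layer correspond to equally coloured, hence nonadjacent, vertices, so no edge of $D$ is attached at their crossing point $p_{ij}$; thus they meet only in a legal right-angle crossing of two edges at a point interior to both. The column $p_{ij}$ lies on no gadget except $L_i$ and $L_j$, so the vertical edge of $v_iv_j$ crosses every intermediate layer through empty space; and distinct graph-edges occupy distinct columns, so no two vertical edges ever meet. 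Hence every feature lies in one of the planes $y=1,\dots,\chi(G)$ and $\lambda(D)\le\chi(G)$.

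For $\lambda(D)\ge\chi(G)$, let $D'$ be any valid $y$-compaction. Each $L_i$ is connected and made only of segments parallel to the $x$- and $z$-axes; under $y$-compaction such segments stay horizontal, so all of $L_i$ — including the attachment points on it — keeps a single vertical coordinate $Y(i)$. Each edge gadget was a vertical segment in $D$ and so stays vertical in $D'$; its endpoints share their $x$- and $z$-coordinates, lie on $L_i$ and $L_j$, and cannot coincide, so $Y(i)\ne Y(j)$ whenever $v_iv_j\in E(G)$. Thus $v_i\mapsto Y(i)$ is a proper coloring of $G$, so $D'$ uses at least $\chi(G)$ distinct vertical coordinates and hence at least $\chi(G)$ layers. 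Combining the two bounds gives $\lambda(D)=\chi(G)$. I expect the only genuine work to be the geometric bookkeeping of the first step — verifying transversality and distinctness of the crossing points, their avoidance of corners, endpoints and third gadgets, and the fact that up to $\deg_G(v_i)$ vertical edges can be attached along $L_i$ at distinct interior points without raising its degree past three — after which both inequalities follow immediately.
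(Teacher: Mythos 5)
Your proposal is correct and follows essentially the same construction as the paper: $L$-shaped gadgets whose $xz$-projections pairwise cross, one $y$-parallel edge per graph edge attached at the crossing column, giving $n_D=3n_G+2m_G$ and both inequalities $\lambda(D)\le\chi(G)$ and $\lambda(D)\ge\chi(G)$. You simply make explicit the coordinates and validity checks that the paper's brief proof leaves implicit.
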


\begin{proof}
Our construction uses three vertices in $D$ to form the $L$-shaped gadget for each vertex in $G$, and two more vertices for the vertical edge corresponding to each edge in $G$, so $n_D = 3n_G + 2m_G = O(n_G^2)$. By construction the only conflicting features in a vertical (along the $z$-axis) compaction come from the vertical edges, forcing $\chi(G)$ to be the number of layers in an optimal compaction of $D$.
\end{proof}

\begin{theorem}
If $\P \neq \NP$, then there does not exist a polynomial time algorithm that
approximates the number of layers in an optimal vertex-by-vertex or vertex-by-vertex free vertical compaction of a given
three dimensional orthogonal drawing to within a factor of $n_D^{1/2-\epsilon}$.
\end{theorem}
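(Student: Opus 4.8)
The plan is to reuse the template of Theorem~\ref{thm:row-by-row}, substituting Lemma~\ref{lem:3dconst} for Lemma~\ref{path lemma}. Suppose for a contradiction that an algorithm $\mathcal{A}$ computes, for every three-dimensional orthogonal drawing $D$, a vertex-by-vertex vertical compaction whose number of layers is within a factor $\rho \le n_D^{1/2-\epsilon}$ of optimal. From $\mathcal{A}$ build a coloring algorithm $\mathcal{A}'$ that, given a graph $G$: (i) constructs in polynomial time the drawing $D$ of Lemma~\ref{lem:3dconst}; (ii) runs $\mathcal{A}$ to compact $D$ along the vertical axis; and (iii) colors each vertex of $G$ by the index of the $xy$-layer onto which its $L$-shaped gadget has been moved. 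If two vertices of $G$ are adjacent, their gadgets are joined by a vertical edge of $D$, which no valid compaction may collapse into a single layer; hence the coloring is proper and uses exactly as many colors as $\mathcal{A}$ used layers.

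Since Lemma~\ref{lem:3dconst} guarantees that the optimal number of layers of a vertical compaction of $D$ equals $\chi(G)$, the approximation ratio of $\mathcal{A}'$ for coloring is the same number $\rho$. Using $n_D = O(n_G^2)$,
\[
\rho \le n_D^{1/2-\epsilon} = O\bigl(n_G^{2(1/2-\epsilon)}\bigr) = O\bigl(n_G^{1-2\epsilon}\bigr),
\]
which contradicts Theorem~\ref{lem:chi-rho} applied with the fixed constant $2\epsilon$ (we may assume $\epsilon < 1/2$, as the claim for larger $\epsilon$ follows a fortiori). This settles the vertex-by-vertex case; moreover, the same reduction together with the $\NP$-hardness of $3$-colorability shows that deciding whether a three-layer vertical compaction exists is $\NP$-hard.

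For the vertex-by-vertex free case the upper bound is immediate, since a vertical compaction is in particular a free compaction; so the only thing to check is that horizontal freedom (moving vertices in $x$ and $z$) cannot bring the number of vertical layers below $\chi(G)$. The reason is that in the construction of Lemma~\ref{lem:3dconst} the $L$-shaped gadgets are placed so that their vertical projections onto the $xy$-plane pairwise cross, and this crossing pattern is rigid: translating a gadget horizontally so as to remove one of these projected crossings necessarily forces a forbidden overlap between that gadget and an incident vertical edge, which is exactly the mechanism producing the layer conflicts in the lemma. Hence the conflict structure among gadgets is unchanged under horizontal motion and the argument above applies verbatim. (Alternatively, one can graft on an anchoring layer in the spirit of Lemma~\ref{lem:fixbar} that freezes both horizontal directions at the cost of $O(n_D)$ extra vertices and one extra layer, perturbing the ratio only by a lower-order factor.) I expect this rigidity check to be the only point requiring care; the rest is the now-standard bookkeeping of an approximation-preserving reduction.
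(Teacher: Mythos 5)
Your proposal is correct and follows essentially the same route as the paper, which simply invokes Lemma~\ref{lem:3dconst} ($n_D = O(n_G^2)$, optimal layer count $=\chi(G)$) and the reduction template of Theorem~\ref{thm:row-by-row}. The extra ``rigidity'' discussion for the free case is unnecessary: every vertical edge must remain vertical with non-coincident endpoints, so adjacent gadgets occupy distinct layers no matter how vertices move horizontally---which is exactly the properness argument you already gave.
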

\begin{proof}
In this case $n_D = O(n_G^2)$, and the number of layers in an optimal compaction is equal to $\chi(G)$. The proof follows same lines as the proofs in Section~\ref{sec:row-by-row} on row-by-row-compaction.
\end{proof}

\begin{figure}[b!]
\centering
\raisebox{2.1em}{\includegraphics[width=0.375\textwidth]{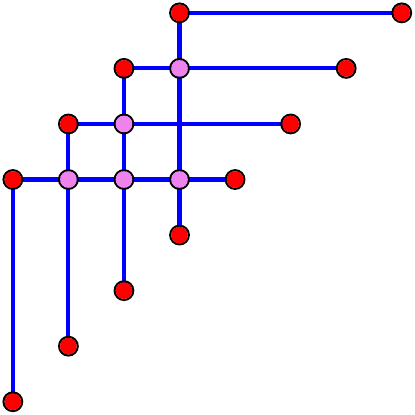}}\rule{2em}{0em}
\includegraphics[width=0.375\textwidth]{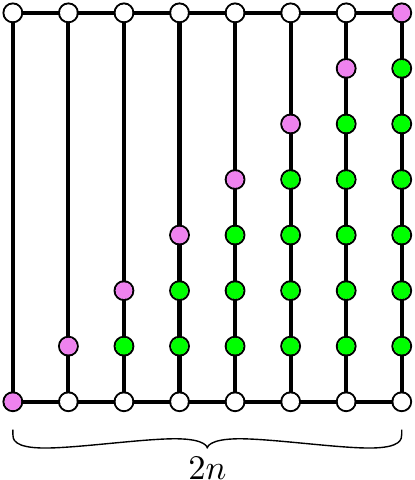}
\caption{Left: View of the 3D-drawing in Figure~\ref{fig:3dredux} from above. Vertical
edges meet at lighter shaded vertices. Right: Top/bottom frame.
}
\label{fig:3dreduxTopDown}
\end{figure}
As we did in previous sections we will reduce compaction with respect to volume,
longest edge, and total edge length to compaction with respect to layering via a frame. 
In this case we will place a mesh layer above and below the drawing, keeping the original drawing between the two layers.

\begin{lemma}
\label{lem:3dframe}
Given a graph $G$ with more than $4$ vertices we can construct in polynomial time a $3D$ orthogonal drawing $D$ with maximum degree three with $n_D = O(n_G^2)$, and such that
\[
\mu_v(D) = \chi(G) + 1 \quad\text{and}\quad n_G^2 \chi(G) \leq \tau(D) \leq O(n_G^2) \chi(G)
\]
where $\mu_v(D)$ is the length of the longest vertical edge in an optimal vertical compaction.
\end{lemma}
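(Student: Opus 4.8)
The plan is to reuse the recipe of Lemmas~\ref{lem:rxrframe} and~\ref{lem:vxvlong}: start from the drawing produced by Lemma~\ref{lem:3dconst}, whose optimal compaction uses exactly $\chi(G)$ layers, and attach a rigid \emph{frame} that converts ``number of layers'' into ``longest vertical edge'' and into ``total edge length.'' Here the frame is the pair of mesh layers shown on the right of Figure~\ref{fig:3dreduxTopDown} --- one above and one below the original drawing --- together with a large family of vertical ``strut'' edges directly joining the two meshes. Exactly as in the earlier lemmas, once the original drawing is shown to be trapped strictly between the two meshes, each strut has length one more than the number of layers used by the $L$-shaped gadgets, the struts are the longest vertical edges, and there are enough of them to pin down the total edge length up to a constant factor.

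Concretely, I would take each mesh to be one connected bounded-degree curve --- a boustrophedon path visiting every integer point of the bounding box of the footprint of the $L$-gadgets --- so that it is automatically flat (its horizontal edges keep all of its vertices on a single layer), it has $O(n_G^2)$ vertices and $O(n_G^2)$ total length since the footprint has side $\Theta(n_G)$, and its interior vertices have degree~$2$, leaving one unit of free degree at each. I would then (i)~run a vertical edge from each $L$-gadget up to a vertex of the top mesh and another down to a vertex of the bottom mesh, attaching on each gadget at interior points of its arms that lie on no other gadget's projection; and (ii)~add $\Theta(n_G^2)$ vertical struts joining the top mesh to the bottom mesh at the lattice points of the footprint lying on no gadget segment and no base edge --- for the intended $L$-shape placement these are simply the points with both coordinates odd, of which there are $\ge n_G^2$. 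All new vertices have degree at most~$3$, no new conflict is created among struts, connectors, gadgets and base edges, and $n_D$ stays $O(n_G^2)$.

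The crux is the \emph{sandwiching} claim: in any valid vertical compaction each mesh occupies a single layer (its horizontal edges keep it flat), the two mesh layers $a<b$ are distinct (a strut has positive length, and the global flip of the drawing costs nothing), and every $L$-gadget lies strictly between layers $a$ and $b$. This is where a flat mesh alone would not suffice --- a gadget on a layer above the mesh does not meet it --- and it is why each gadget is wired to \emph{both} meshes across the whole footprint: a gadget sharing a layer with a mesh would have its segments run through vertices of that mesh (an illegal vertex/edge incidence, since the mesh has a vertex at every footprint lattice point), and a gadget lying above layer $b$ would have its connector down to the bottom mesh run through a vertex of the top mesh; the lower case is symmetric. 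Granting the claim, the $L$-gadget part of the compaction is itself a valid compaction of the drawing of Lemma~\ref{lem:3dconst}, so it uses at least $\chi(G)$ layers, all lying strictly between $a$ and $b$; hence $b-a\ge\chi(G)+1$, every strut has length at least $\chi(G)+1$, and $\mu_v(D)\ge\chi(G)+1$. Conversely, placing the two meshes on the layers immediately outside an optimal $\chi(G)$-layer compaction of the base drawing leaves every vertical edge of length at most $\chi(G)+1$, so $\mu_v(D)=\chi(G)+1$. The bounds on $\tau(D)$ then follow by counting: the $\ge n_G^2$ struts, each of length at least $\chi(G)+1$ in every valid compaction, already give $\tau(D)\ge n_G^2\chi(G)$; and in the compaction just exhibited the drawing has $O(n_G^2)$ edges, every vertical edge has length $O(\chi(G))$, and the horizontal parts (dominated by the two meshes) contribute only $O(n_G^2)\le O(n_G^2)\chi(G)$, so $\tau(D)\le O(n_G^2)\chi(G)$.

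The main obstacle is engineering the frame so that the sandwiching claim genuinely holds while respecting every side constraint at once: the two meshes must be flat and degree-bounded (realized as paths, not true grids), must be dense enough in the footprint that no gadget can slip past them in either direction, and must still leave room for $\Theta(n_G^2)$ conflict-free struts and for one extra incident edge at each mesh vertex used. A secondary nuisance is that for very small graphs an $L$-gadget may be too short to carry all of its attachment points; assuming $n_G>4$ (or simply dilating the $L$-shapes by a constant factor) removes this, after which the remaining checks --- no illegal vertex/edge incidences, and preservation of the degree bound --- are the same routine case analysis already performed in Lemmas~\ref{lem:rxrframe} and~\ref{lem:vxvlong}.
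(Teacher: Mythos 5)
Your proposal is correct and follows essentially the same route as the paper: sandwich the drawing of Lemma~\ref{lem:3dconst} between two flat, degree-bounded mesh layers attached to every $L$-gadget from above and below, add $\Theta(n_G^2)$ vertical struts between the meshes whose length is one more than the number of layers used, and count lengths to get both bounds. The only differences are cosmetic (the paper builds each mesh from $2n_G$ parallel paths with diagonal attachment points and sub-diagonal struts, while you use a single boustrophedon path with struts at odd-odd lattice points), and your explicit argument for the sandwiching claim is if anything more detailed than the paper's.
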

\begin{proof}
From the drawing produced by Lemma~\ref{lem:3dconst} we add $O(n_G^2)$ edges, and consequently $O(n_G^2)$ vertices, in $2n_G$ parallel paths above and below the drawing as illustrated in Figure~\ref{fig:3dreduxTopDown}, where the frame on the right side of the drawing is placed directly over and below the $3D$ drawing to the left. The diagonal nodes are connected to the ends of the L-gadgets from the top and the bottom to keep them between the two frames. The sub-diagonal green vertices are connected with edges between the upper and lower frame. This construction is illustrated in Figure~\ref{fig:3dframe}.

By construction the longest vertical edges are the edges connecting the top and bottom frames; their length is one unit more than the number of layers between the top and the bottom frame. Hence $\mu_v(D) = \chi(G) + 1$ by Lemma~\ref{lem:3dconst}.

The number of edges connecting the two layers is given by
\[
1+2+3+\cdots+(2n_G-2) = O(n_G^2)
\]
and is greater than $n_G^2$ for $n > 4$. The total edge length used by the L-gadgets is at most $2n_G^2$, and $\chi(G)m_G$ for the connecting segments. The top and bottom frames add $O(n_G^2)$ length to the drawing, and the connecting edges add $O(n_G^2)\chi(G)$. So we have that $n_G^2 \chi(G) \leq \tau(D) \leq O(n_G^2)\chi(G)$.
\end{proof}

\begin{figure}[b!]
\centering
\includegraphics[width=0.95\textwidth]{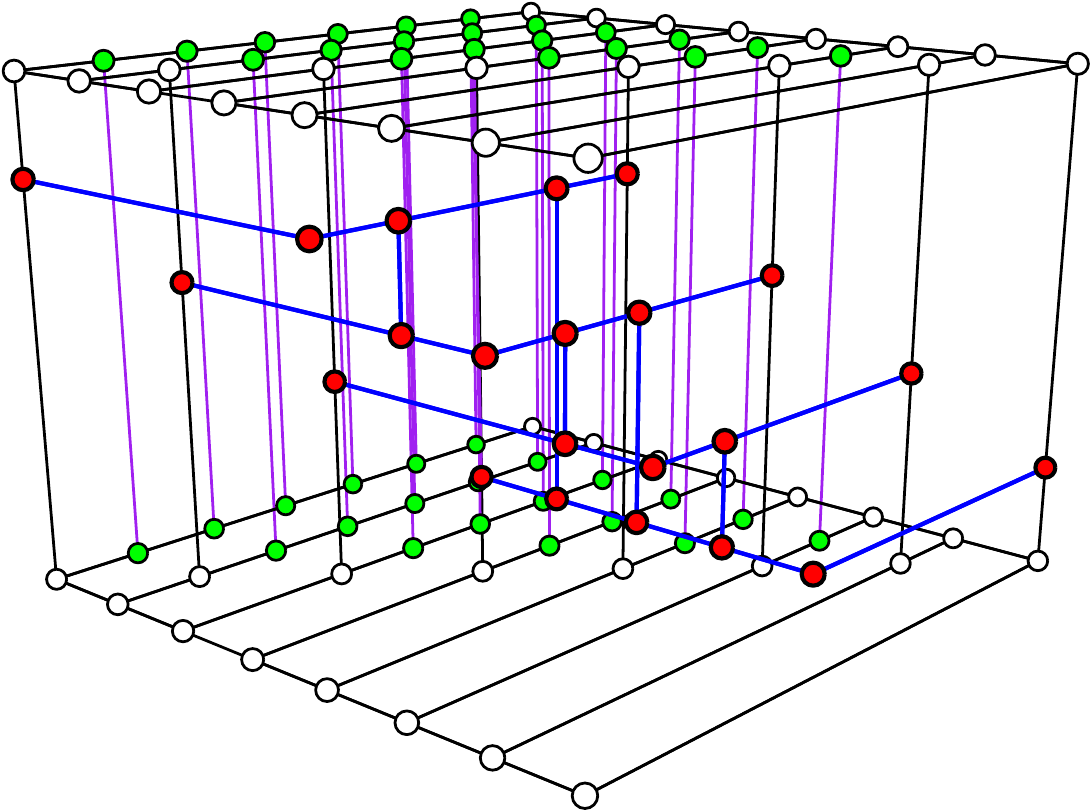}
\caption{3D frame construction
\label{fig:3dframe}}
\end{figure}
\newpage
\begin{theorem}
If $\P \neq \NP$, then there does not exist a polynomial time algorithm that
approximates the volume, longest vertical edge, or total edge length 
in an optimal layer compaction of a given three dimensional orthogonal drawing 
to within a factor of $n_D^{1/2-\epsilon}$.
\end{theorem}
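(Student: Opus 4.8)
The plan is to reuse the approximation-preserving reductions of Sections~\ref{sec:row-by-row} and~\ref{section:vertex-vertex}, now built on top of Lemma~\ref{lem:3dframe} rather than its two-dimensional counterparts, together with the inapproximability of graph coloring (Theorem~\ref{lem:chi-rho}). Since Lemma~\ref{lem:3dframe} produces from an $n_G$-vertex graph $G$ a drawing with $n_D=O(n_G^2)$, any $n_D^{1/2-\epsilon}$-approximation translates into an $O(n_G^{1-2\epsilon})$-approximation for a quantity tied to $\chi(G)$, which is exactly the regime ruled out by Theorem~\ref{lem:chi-rho}.

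For the longest vertical edge the reduction is immediate: Lemma~\ref{lem:3dframe} gives $\mu_v(D)=\chi(G)+1$, so a polynomial-time $\rho$-approximation of the longest vertical edge in an optimal compaction, with $\rho\le n_D^{1/2-\epsilon}$, yields after subtracting $1$ an estimate of $\chi(G)$ correct to within a factor $O(\rho)=O(n_G^{1-2\epsilon})$, contradicting Theorem~\ref{lem:chi-rho}. For the total edge length I would follow the argument of Theorem~\ref{thm:row-by-row-edgeSum} almost verbatim: Lemma~\ref{lem:3dframe} already supplies the sandwich $n_G^2\,\chi(G)\le\tau(D)\le O(n_G^2)\,\chi(G)$, so from an approximate compaction of total length $\tau'(D)$ with $\tau(D)\le\tau'(D)\le n_D^{1/2-\epsilon}\tau(D)$ one recovers
\[
\chi(G)\;\le\;\tau'(D)/n_G^2\;\le\;O(n_G^{1-2\epsilon})\,\chi(G),
\]
again contradicting Theorem~\ref{lem:chi-rho}.

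For the volume there are two cases. In vertex-by-vertex vertical ($y$-layer) compaction the $x$- and $z$-extents of the drawing are frozen, so the volume of any compaction is a fixed constant times its number of layers, and the claim reduces directly to the hardness of approximating the number of layers established above. For vertex-by-vertex \emph{free} compaction I would first make both horizontal dimensions uncompactable by a three-dimensional analogue of Lemma~\ref{lem:fixbar}: attach below the drawing from Lemma~\ref{lem:3dframe} a planar mesh of vertices, one at each occupied $(x,z)$-position, wired together into a connected grid of maximum degree three and joined to the rest of the drawing at a single point, so that the mesh conflicts with every existing $y$-layer and cannot shrink in either the $x$- or the $z$-direction. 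This adds only $O(n_D)$ vertices, raises the number of layers by one, and preserves maximum degree three; thereafter the volume is once more proportional to the number of layers, and the previous hardness result finishes the argument.

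I expect the only genuinely delicate step to be this last uncompactability gadget: one has to verify that the added mesh simultaneously blocks horizontal motion in the $x$- and $z$-directions, that it does not interfere with the vertical compaction structure inherited from Lemma~\ref{lem:3dframe}, and that it keeps the maximum degree at three while adding only a linear number of vertices. Everything else is bookkeeping --- the passages between $\mu_v(D)$, $\tau(D)$, the volume, and $\chi(G)$ cost only constant factors and an additive $+1$, which are absorbed just as in Sections~\ref{sec:row-by-row} and~\ref{section:vertex-vertex} and leave the $n^{1-\epsilon}$ threshold intact.
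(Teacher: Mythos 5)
Your proposal is correct and follows essentially the same route as the paper: the paper's own proof is just the observation that, with the bounds of Lemma~\ref{lem:3dframe} and $n_D=O(n_G^2)$, the arguments of Theorems~\ref{thm:row-by-row} and~\ref{thm:row-by-row-edgeSum} carry over verbatim, which is exactly what you spell out. Your extra discussion of the volume in the free-compaction case (a three-dimensional analogue of Lemma~\ref{lem:fixbar}) goes beyond what the paper writes down, but it is consistent with its stated intent and with how the two-dimensional area case was handled.
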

\begin{proof}
Having the bounds in Lemma~\ref{lem:3dframe} the proof follows as before.
\end{proof}

We can also achieve a similar result for longest edge, without having to limit
our result to the longest \emph{vertical} edge, by replacing each L-shaped vertex gadget
by an L-bundle, similar to our vertex bundle in the two dimensional case. This is needed to prevent the longest edge from being a $n_G$ length leg of the L-gadget. We
omit the details.

\section{Coloring Counterexample}
In the conference version of this paper we included an approximation algorithm for compaction which was later discovered to be flawed. In this algorithm we defined a \emph{incompatibility graph}, where the vertices of the incompatibility graph represent sets of features in the drawing that must move in tandem. Two vertices of the incompatibility graph are connected by an edge when the drawing features they represent cannot be compacted to the same row of the drawing, i.e., when they contain parts of the drawing that are directly above one another. A coloring of the incompatibility graph was then used to to compact the drawing, using a row for each color. The problem is that although compacting a drawing does produce a coloring of its incompatibility graph, the converse does not hold. In Figure~\ref{fig:cexamp} we see a drawing whose incompatibility graph is 2-colorable (in fact it is a path), but which is completely uncompactable.

\begin{figure}[ht!]
\centering
\includegraphics[width=0.5\textwidth]{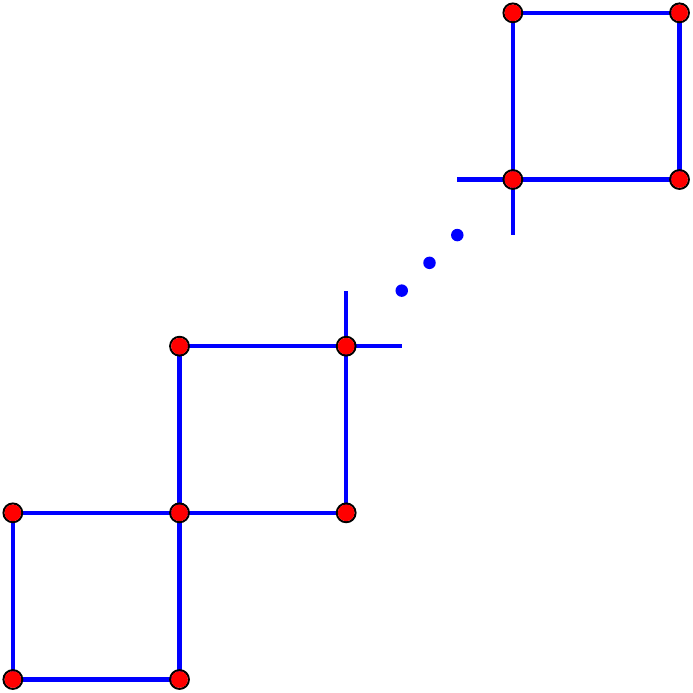}
\caption{An uncompactable graph with a 2-colorable incompatibility graph.}
\label{fig:cexamp}
\end{figure}

\section{Fixed-Parameter Tractability of \\Vertex-By-Vertex Vertical Compaction}

\begin{lemma}
\label{lem:fpt-fixed-num-of-layers}
Given an orthogonal drawing $D$ we can  compact $D$ into $k$ rows
in $O(k!(n_D+b_D))$ time, if such a compaction is possible.
\end{lemma}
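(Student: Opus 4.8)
The plan is to decide, and if possible realize, an assignment of the vertices and bends of $D$ to the $k$ rows $\{1,\dots,k\}$ that witnesses a valid vertex-by-vertex vertical compaction. Because every vertex and bend keeps its $x$-coordinate, the only unknown is each object's row; two objects joined by a horizontal edge segment necessarily share a row, so it is enough to choose a row for each maximal horizontal segment (and for each vertex or bend lying on no horizontal segment). Horizontal segments whose $x$-ranges overlap must occupy distinct rows, so if any vertical line crosses more than $k$ horizontal segments the instance is infeasible and we stop; otherwise at most $k$ horizontal segments are \emph{active} at once. I would then run a left-to-right sweep over the $O(n_D+b_D)$ distinct $x$-coordinates, maintaining the set of all \emph{reachable states}, where a state records which row each currently active horizontal segment sits on. A state is an injection from a fixed set of at most $k$ segments into $\{1,\dots,k\}$, so there are never more than $k!$ states.

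At each column $x_i$ I would process the events occurring there --- horizontal segments ending, horizontal segments starting, bends, vertices, and vertical segments --- and update every reachable state accordingly: drop an ending segment from the injection; assign each starting segment (and each ``free'' vertex) an as-yet-unused row; and then discard any assignment not realizable inside column $x_i$. Realizability in a single column is a purely local test: no two vertices or bends may coincide, no vertex or bend may lie in the relative interior of a horizontal or vertical segment of another edge (a horizontal segment is allowed to cross the interior of a foreign vertical segment, but incidences and nonzero-length overlaps are not), the two endpoints of each vertical segment must receive distinct rows, and vertical segments in column $x_i$ must be pairwise non-overlapping. If the reachable set ever becomes empty the answer is ``no''; if it is nonempty after the final column the answer is ``yes'', and a compacted drawing is obtained by tracing any surviving chain of states back through the sweep.

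For correctness, the row chosen for a horizontal segment is used at every column the segment spans, so a legal global row assignment is precisely a sequence of states in which consecutive states agree on shared segments and all local tests pass; conversely any such sequence describes a $k$-row drawing whose segments are parallel to those of $D$. For the running time, the numbers of distinct $x$-coordinates, of horizontal and vertical segments, and of vertices and bends are all $O(n_D+b_D)$, so the local complexities of the columns sum to $O(n_D+b_D)$; since at most $k!$ states are ever present and --- with the state set stored in a structure supporting constant-time insertion, deletion, and membership --- the work spent at a column is proportional to $k!$ times its local complexity, the total is $O(k!(n_D+b_D))$.

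The step I expect to be the main obstacle is pinning down exactly which local configurations in a single column are realizable and doing so without inflating the running time: one must be careful that crossings are permitted while incidences and overlaps are not, that edge segments may be traversed in either direction, and that a vertical segment forbids foreign vertices and bends (but not foreign horizontal crossings) at its intermediate rows --- and the candidate extensions at a column that introduces new rows must be generated so that the per-column work stays within the claimed $k!$ factor rather than a larger function of $k$.
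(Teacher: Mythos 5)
Your proposal is correct and follows essentially the same approach as the paper's proof: a left-to-right plane sweep that maintains all valid assignments (at most $k!$) of the features currently crossing the sweep line into the $k$ rows, rejects when no valid assignment survives, and recovers the compaction by backtracking through the stored states. You spell out the local realizability conditions and the state bookkeeping in more detail than the paper does, but the underlying algorithm and the $O(k!(n_D+b_D))$ analysis are the same.
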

\begin{proof}
We construct local assignments of the features into $k$ rows via a left-to-right plane sweep. The input drawing may be assumed to be given with coordinates on an $n\times n$ grid, so its features can be sorted in linear time using bucket sorting. While sweeping the plane we maintain the set of input features that intersect the sweep line along with a record of valid assignments of these features into the $k$ rows of our output drawing.

When a feature first intersects the sweep line we try to place it into the collection of valid assignments. If there are $\ell$ features intersecting the sweep line prior to the insertion, we have at most $\ell!\binom{k}{\ell}$ valid assignments to consider. In each of these valid assignments there are $k - \ell$ free rows. Altogether at most $k!$ configurations will be considered for each feature insertion. When the sweep line moves past a feature its row is freed for future use.

If at any point we cannot find any valid assignment for a new feature, we conclude that a compaction into $k$ rows is not possible. On the other hand if the last feature can be placed into a valid assignment, then a compaction into $k$ layers is possible. To recover the global assignment of horizontal features into rows, we may backtrack through the sets of  local assignments as is standard for a dynamic programming algorithm of this type.
\end{proof}

\begin{theorem}
An optimal vertex-by-vertex vertical compaction of an orthogonal drawing $D$ can be found in $O(\lambda!(n_D+b_D))$ time where $\lambda = \lambda(D)$.
\end{theorem}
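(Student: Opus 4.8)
The plan is to reduce the optimization problem to a short increasing sequence of the fixed‑$k$ feasibility tests supplied by Lemma~\ref{lem:fpt-fixed-num-of-layers}. For $k = 1, 2, 3, \dots$ in turn, I would run that algorithm to decide whether $D$ admits a vertex‑by‑vertex vertical compaction into $k$ rows, stopping at the first value of $k$ for which the test succeeds. The loop always terminates: every drawing can be compacted into at most as many rows as it already uses, and that number is at most $n_D + b_D$. By the definition of $\lambda$, the first value of $k$ that succeeds is exactly $\lambda(D)$, and the row assignment recovered by the backtracking step of Lemma~\ref{lem:fpt-fixed-num-of-layers} — together with the unchanged $x$‑coordinates of the features — yields an optimal compacted drawing.

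The running time is the sum of the costs of the individual tests, $\sum_{k=1}^{\lambda} O\bigl(k!\,(n_D + b_D)\bigr)$, so it remains only to check that $\sum_{k=1}^{\lambda} k! = O(\lambda!)$. Writing the sum in reverse and pulling out $\lambda!$,
\[
\sum_{k=1}^{\lambda} k! \;=\; \lambda!\sum_{j=0}^{\lambda-1}\frac{1}{\lambda(\lambda-1)\cdots(\lambda-j+1)},
\]
and for $\lambda \ge 2$ each summand with $j \ge 1$ is a product of $j$ factors that are all at least $2$ (the smallest being $\lambda-j+1 \ge 2$ since $j \le \lambda-1$), hence at most $2^{-j}$; the parenthesized factor is therefore bounded by $1 + \sum_{j\ge 1} 2^{-j} = 2$. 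Thus $\sum_{k=1}^{\lambda} k! \le 2\,\lambda!$ (the bound is trivial for $\lambda = 1$), and the total running time is $O(\lambda!\,(n_D + b_D))$, as claimed.

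The one point that requires care — and the only place where a naive approach would fail — is that the tests must be run in increasing order of $k$ and stopped as soon as one succeeds, rather than located by doubling or binary search. Because the cost $k!\,(n_D+b_D)$ of a single test already grows faster than any polynomial in $k$, an overshoot past $\lambda$ could dominate the whole running time, whereas the cheap‑to‑expensive ordering guarantees that the final, most expensive, successful test dominates the sum. In particular, no separate upper bound on $\lambda(D)$ has to be computed beforehand; the incremental search discovers $\lambda(D)$ on its own, so this is the main (and essentially the only) obstacle, and it is handled simply by choosing the right search order.
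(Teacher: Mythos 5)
Your proposal is correct and follows exactly the paper's argument: apply Lemma~\ref{lem:fpt-fixed-num-of-layers} for $k=1,2,3,\dots$ and stop at the first success, which is $\lambda(D)$. The only difference is that you spell out the bound $\sum_{k=1}^{\lambda}k!\le 2\,\lambda!$ and the necessity of the cheap-to-expensive search order, details the paper leaves implicit.
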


\begin{proof}
Apply Lemma~\ref{lem:fpt-fixed-num-of-layers} for $k=1,2,3,\dots$ until finding a value of $k$ for which a valid layering exists.
\end{proof}

\section{Vertex-By-Vertex Free Compaction for Paths}
In contrast with row-by-row compaction (Section~\ref{sec:row-by-row}), we have a polynomial time algorithm for
vertex-by-vertex free compaction of a path. In fact, in the following theorem,
we give a simple, linear time algorithm to achieve an optimal compaction. In this section by a \emph{vertical subpath} we mean a maximal subpath containing only vertical edges, similarly for a \emph{horizontal subpath}

\begin{figure}[ht]
\centering
\ifFull
\includegraphics[width=.6\textwidth]{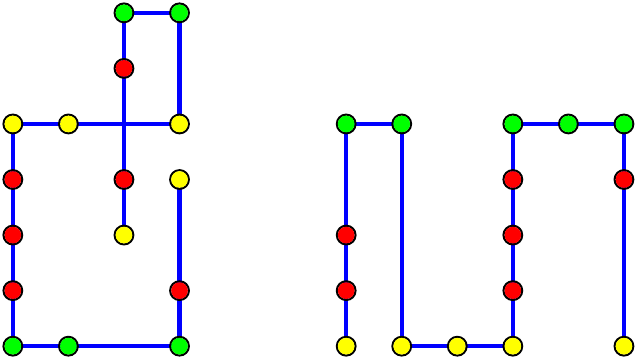}
\else
\includegraphics[width=.8\textwidth]{figures/path_twisted2}
\fi
\caption{Compacting a path into $k = 5$ rows. 
Here the colors on vertices are simply meant to help
clarify the correspondence between the two paths.}
\label{fig:path-untwist}
\end{figure}

\begin{theorem}
An optimal vertex-by-vertex free row compaction for an orthogonal path can by found in $O(n_D+b_D)$ time. Such a compaction will be into $k$ rows where $k$ is the number of vertices on the longest vertical path.
\end{theorem}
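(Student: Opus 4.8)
The plan is to prove that $\lambda(D)=k$ by matching a lower bound with a constructive upper bound, and to observe that the construction can be carried out in a constant number of sweeps along the path, giving linear time.

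\emph{Lower bound.} Let $Q$ be a maximal vertical subpath (a maximal subpath all of whose edges are vertical) carrying the largest number $k$ of vertices and bends. In any orthogonal drawing of the path that preserves the edge orientations, the edges of $Q$ are all vertical and consecutively incident, so every point of $Q$ has the same $x$-coordinate; since distinct vertices and bends may not coincide, the $k$ points of $Q$ occupy $k$ distinct $y$-coordinates, so the drawing uses at least $k$ rows. Hence $\lambda(D)\ge k$. The same observation shows that any vertical subpath must be monotone in $y$ in any valid drawing (a reversal on a single vertical line would make two of its edges overlap), a fact I will reuse.

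\emph{Upper bound and algorithm.} First I would traverse the path once to split it into its maximal horizontal subpaths $H_0,\dots,H_M$ and maximal vertical subpaths $V_0,\dots,V_{M-1}$, with $V_t$ joining $H_t$ to $H_{t+1}$, and set $k$ to the maximum number of points on any $V_t$. Then I would build the compacted drawing $D'$ by two independent passes. For the $x$-coordinates, \emph{unroll} the path: sweep along it, increasing the running $x$ by $1$ on each horizontal edge and leaving $x$ fixed on each vertical edge. This keeps horizontal edges horizontal, gives every vertical subpath its own $x$-coordinate, and makes the horizontal subpaths occupy consecutive, interior-disjoint $x$-intervals; the disjointness holds because any two consecutive vertical subpaths are separated by at least one horizontal edge (otherwise their shared endpoint would have two vertical incident edges and they would be one maximal vertical subpath). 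For the $y$-coordinates, put $H_t$ on row $0$ when $t$ is even and on row $k-1$ when $t$ is odd, and route each $V_t$ monotonically between its two endpoint rows (which are $0$ and $k-1$) through any $|V_t|\le k$ of the rows $0,\dots,k-1$ containing those two endpoints, using all $k$ rows for one longest vertical subpath. Finally I would verify that $D'$ is legal: each segment keeps its orientation by construction; no two vertices or bends coincide (distinct $x$ within a horizontal subpath, distinct monotone $y$ within a vertical subpath, separation in $x$ between non-incident subpaths); and no two edges overlap in positive length (two horizontal edges on a common row lie in horizontal subpaths with disjoint $x$-intervals; two vertical edges are either at distinct $x$-coordinates or are consecutive edges of a monotone vertical subpath). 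Since the $y$-coordinates used are exactly $0,\dots,k-1$, $D'$ has exactly $k$ rows, matching the lower bound, hence it is optimal; and the whole procedure is a constant number of linear sweeps, so it runs in $O(n_D+b_D)$ time.

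\emph{Main obstacle.} The only nonroutine point is choosing a single row assignment that serves all vertical subpaths at once: a greedy assignment can strand a horizontal subpath on an interior row from which no vertical subpath long enough to require many rows can be routed within $k$ rows. Alternating the horizontal subpaths between the two extreme rows sidesteps this, since it makes every consecutive pair of horizontal subpaths exactly $k-1$ apart — the widest available gap, and at least as wide as any vertical subpath demands — and one then only has to confirm that this vertical choice is compatible with the unrolled horizontal layout, which holds precisely because consecutive vertical subpaths are always separated by a genuine horizontal edge.
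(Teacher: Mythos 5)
Your proposal is correct and follows essentially the same route as the paper's proof: the lower bound comes from the longest maximal vertical subpath needing all its vertices in distinct rows, and the matching upper bound "unrolls" the path horizontally while alternating the maximal horizontal subpaths between the two extreme rows and routing each vertical subpath monotonically between them, all in a constant number of linear passes. Your write-up is simply more explicit than the paper's about the $x$-coordinate assignment and the verification that no features coincide or overlap.
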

\newpage
\begin{proof}
First we replace the bends in the path with special ``bend vertices'' that will be changed back to bends after the compaction.
Now, the number of rows required by a compacted drawing must be at least
the number of vertices in the longest vertical subpath, since each of these
vertices must be placed in its own row to achieve an unambiguous drawing. We
will show that we can compact the drawing into exactly this number of rows, and therefore optimally compact the drawing.
To perform a vertex-by-vertex free compaction of an orthogonal path we perform the
following steps, which are illustrated in Figure~\ref{fig:path-untwist}.
\begin{enumerate}
\item
Compute the number $k$ of vertices is the longest vertical subpath by walking the path.
\item
Order the horizontal subpaths in walk order. Alternate placing horizontal paths, in path order, in rows $1$ and $k$ overlapping in their endpoints.
\item
Connect the overlapping horizontal paths with vertical lines, and place the appropriate   number of vertices on the edge to construct the vertical path between the overlapping horizontal paths.
\end{enumerate}
The end result is an optimally vertex-by-vertex free compacted path with respect to the number of rows.
\end{proof}

In this algorithm moving vertices horizontally was essential, so such a compaction is not possible in vertex-by-vertex vertical compaction. In contrast to our result, Brandes et al. showed that the problem of orthogonal order preserving drawing is $\NP$-hard for a path\cite{Brandes-Ortho}. We leave as an interesting open problem the complexity of vertex-by-vertex vertical compaction for a path.

\section{Conclusions}
Our investigations have determined lower bounds for several different approximation and fixed-parameter versions of the compaction problem. Currently we have no upper bounds on the approximation ratio for these compaction problems. With our techniques $O(\sqrt{n})$ is a natural upper bound, as our reductions produce drawings with at least one vertex for each edge in the original graph, approximately squaring the number of vertices in the problem. Can algorithms be constructed to match this expected upper bound? Can the lower bounds of $\Omega(n^{1/3-\epsilon})$ be improved to $\Omega(n^{1/2-\epsilon})$ or higher? We leave these questions open for future research.

\ifFull
\section*{Acknowledgments}
This work was supported in part by NSF grant 0830403 and by the Office of Naval Research under grant N00014-08-1-1015. A previous version of this paper appeared in the International Symposium on Graph Drawing 2011.
\fi

\raggedright
\bibliographystyle{abuser}
\bibliography{references}

\end{document}